\theoremstyle{plain}
\newtheorem{lemma}{Lemma}
\newcolumntype{L}[1]{>{\raggedright\arraybackslash}p{#1}}
\newcommand{\qdb}[1]{\ket{\text{QDB}^{(#1)}}} %
\definecolor{CeruleanRef}{RGB}{12,127,172}
\newtheorem{theorem}{Theorem}
\newcommand{\RoundRect}[4]{\draw[rounded corners=5,black!60!white,fill=black!5!white,] (#1,#2) rectangle ++(#3,#4);}
\newcommand{\RoundRectGrid}[2]{%
\RoundRect{0}{0}{#1}{#2}
\RoundRect{#1 + 0.075}{0}{#1}{#2}
\RoundRect{0}{#2 + 0.075}{#1}{#2}
\RoundRect{#1 + 0.075}{#2 + 0.075}{#1}{#2}
}
\begin{document}

\title{Operational Framework for a Quantum Database}

\author[1,2]{Carla Rieger}
\author[2]{Michele Grossi}
\author[3]{Gian Giacomo Guerreschi}
\author[2]{Sofia Vallecorsa}
\author[1]{Martin Werner}

\affil[1]{School of Engineering and Design, Technical University of Munich, Germany}
\affil[2]{European Organization for Nuclear Research (CERN), Geneva 1211, Switzerland}
\affil[3]{Intel Labs, Intel Corporation, 2200 Mission College Blvd, Santa Clara, CA 95054, United States of America}

\maketitle
\begin{abstract}
Databases are an essential component of modern computing infrastructures and allow efficient manipulation of inherently structured data. The structure depends on the type and relationships of the individual data elements and on the access pattern.
Extending the concept of databases to the quantum domain is expected to increase both the storage efficiency and the access parallelism through quantum superposition. In addition, quantum databases may be seen as the result of a prior state preparation ready to be used by quantum algorithms when needed. On the other hand, limiting factors exist and include entanglement creation, the impossibility of perfect copying due to the no-cloning theorem, and the impossibility of coherently erasing a quantum state.

In this work, we introduce quantum databases within the broader context of data structures using classical, or more precisely \textit{cloneable}, and quantum data and indexing. In particular, we are interested in quantum databases' practical implementation and usability,  focusing on the definition of the basic operations needed to create and manipulate data stored in a superposition state. Specifically, we address the case of quantum indexing in combination with cloneable data. For this scenario, we define the operations for database preparation, extension, removal of indices, writing, and read-out of data, as well as index permutation. We present their algorithmic implementation and highlight their advantages and limitations. Finally, we introduce the next steps toward defining the same operations in the more general context of quantum indexing and quantum data.
\end{abstract}

\section{Introduction}
\label{sec:introduction}
\vspace{-3mm}
Managing and analyzing large amounts of data is a fundamental task in the big-data age when most decisions are expected to be data-driven. At the same time, new technologies like quantum computing have been developed to solve computational bottlenecks. It is natural to ask whether the advantages also extend to the database space.
One of the key components for quantum computational advantage, e.g., in Grover's algorithm~\cite{grover1996fast}, is the presence of quantum superposition states that allow for inherent parallelism and interference. Furthermore, quantum systems have access to a state space that is exponentially large with respect to the number of qubits, which permits a compressed representation of data.
Intuitively, to take full advantage of this representation, one needs to process the quantum state without converting it to classical information, e.g., by avoiding measurement.
Prominent algorithms which may operate on quantum databases include pattern recognition~\cite{trugenberger2002quantum, niroula2021quantum}, collision finding~\cite{Brassard_1998}, quantum search of an unstructured database~\cite{grover1996fast}, solving linear systems of equations~\cite{PhysRevLett.103.150502} and many more.\newline
\begin{figure}[h!]
 \centering

\begin{tikzpicture}[scale=3]

\begin{scope}[shift={(6.6,0)}]

\RoundRectGrid{0.85}{0.85}

\node[align=center] at (0.425, 1.9) {\small classical};
\node[align=center] at (1.36, 1.9) {\small quantum};
\node[align=center] at (0.9, 2.06) {\large Type of data};

\node[align=center, rotate=90] at (-0.31, 0.9) {\large Type of indexing};
\node[align=center, rotate=90] at (-0.14, 1.32) {\small classical};
\node[align=center, rotate=90] at (-0.14, 0.415) {\small quantum};

\node[align=center] at (0.425, 1.35) {\Huge CC};
\node[align=center] at (1.36, 1.32) {\Huge CQ};
\node[align=center] at (0.425, 0.415) {\Huge QC};
\node[align=center] at (1.36, 0.415) {\Huge QQ};

\end{scope}
\end{tikzpicture}
\caption{The different types of database structured with respect to classical and quantum states as index and data ($C$: classical, $Q$: quantum). Operations defined for each possible combination of type of data and indexing for a database differ vastly. We discuss the term \textit{classical} data in a more general context of \textit{cloneable} data.}
\label{fig:quadrants}
\end{figure}
\newline 
Generally, we can characterize a database formalized by the data type stored and how elements are indexed. Thus, we can distinguish four distinct settings as shown in Figure~\ref{fig:quadrants}. This work focuses on designing database manipulation operations for the case of quantum indexing, mainly with respect to classical data encoded in a set of qubits. The community has discussed this data structure also within the context of a quantum random access memory (QRAM)~\cite{giovannetti2008quantum, giovannetti2008architectures} that forms dedicated hardware structures for a coherent read and write operation implementing the oracle call described below in~\eqref{eqn:qram_call}. Thereby, the QRAM query performance is reduced with respect to its classical counterpart as fewer logic gates need to be activated. In this model, the resulting state is a register state indexed by the state~$\ket{j}$ that points to the data element~$\ket{d_j}$. A QRAM memory call returns a superposition of \textit{data} states upon providing a superposition of indices~\cite{giovannetti2008quantum}:
\begin{align}
     \sum_{j} \psi_j |j\rangle \xmapsto{\text{QRAM call}} \sum_{j} \psi_j |j\rangle |d_{j}\rangle \, .
     \label{eqn:qram_call}
\end{align}
In~\eqref{eqn:qram_call}, the amplitudes are normalized, i.e.,~$\sum_j |\psi_j|^2=1$. Within the context of QRAM, a dedicated hardware structure for implementing a QRAM call described in~\eqref{eqn:qram_call} is discussed. In this work, on the contrary, we focus on an algorithmic and, therefore, a software-based framework to be executed directly in the quantum processing unit of a quantum device. We note that if one has access to the dedicated hardware structure of a QRAM, one may use it to prepare a quantum database state. Furthermore, a QRAM may be used to convert classical to quantum indices and thus convert a CC/CQ database (as in Figure~\ref{fig:quadrants}) to a QC/QQ database. On the other hand, a dedicated QRAM does not have to be available to use our software-centric framework. \newline
Recently, the definition of quantum data centers (QDC)~\cite{liu2023data,liu2023quantum} as a QRAM combined with a quantum network has been proposed. In this proposal~\cite{liu2023data}, the applications range from usage as a $T$-gate resource for multiparty private quantum communication to distributed sensing through data compression. 
Thus, superposition states as returned by a QRAM call described in~\eqref{eqn:qram_call} are relevant in many ways. In the context of quantum databases, recent works aim to define database manipulation operations acting on a quantum superposition state~\cite{younes2013database, liu2023data,hai2024quantum} hinting at the differences with respect to classical and quantum data. 

Several fundamental quantum mechanical phenomena limit the operations that can be done within the framework of a quantum database, including the \textit{no-cloning}~\cite{wootters1982single}, and the \textit{no-deletion theorem}~\cite{pati2006no, kumar2000impossibility}. Other works~\cite{younes2013database, liu2023data,hai2024quantum} consider database operations such as \textit{select}, \textit{extend} and \textit{delete} in a more simplified manner or do not provide an implementation. An example is given by the Flip-Flop QRAM~\cite{park2019circuit}, which demonstrates a specific gate-based implementation for \textit{writing} classical data in a superposition state.

In this work, we introduce the specific set of operations that we believe are essential to operating a database in practice and that, thus, constitute the basis of our extension from classical to quantum databases. Their definition is presented in Table~\ref{table:mr}, and their implementation will be discussed for specific cases. In the general case of quantum indexing, we define a quantum database as being the state:
\begin{align}
    \qdb{k} = \sum_{j=0}^{k-1} \psi_j \ket{j}_I \ket{d_j}_D \in \mathcal{H}_I \otimes \mathcal{H}_D \,,
    \label{eqn:general-qdb}
\end{align}
with $\sum_j |\psi_j|^2 =1$. In the following, we often consider the restricted case of balanced database entries, namely in~\eqref{eqn:general-qdb} we have~$\psi_j=1/\sqrt{k}$ for~$\forall j$. Uniform superposition states are of relevance for algorithms, including, e.g., Grover's algorithm for database search~\cite{grover1996fast} and the Quantum Byzantine Agreement~\cite{fitzi2001quantum}. Hence, we operate on a \textit{known} amplitude distribution to demonstrate our protocols. For certain algorithms, we exclude the amplitude~$\psi_0$ corresponding to the index~$\ket{0}_I\ket{0}_D$ from being balanced and use this state as a \textit{probability reservoir} that can be used to add new data elements or be replenished when data are removed. The superposition state in~\eqref{eqn:general-qdb} consists of a separate index~($I$) and a data register~($D$). The index states are orthonormal and belong to the Hilbert space~$\mathcal{H}_I$ of dimension at least~$k$. The data states are part of the Hilbert space~$\mathcal{H}_D$ of dimension~$m$ composed of~$\Tilde{m}$ qubits with~$m = 2^{\Tilde{m}}$. For simplicity, we assume that both spaces are formed by qubits. The superposition state in~\eqref{eqn:general-qdb} is a linear combination of~$k$ elements. For the data elements, we used the shorthand notation~$\ket{d_j} = \ket{d_{\alpha_j}}$ indicating that the~$j$-th data is chosen from a set of states~$\{\ket{d_{\alpha}} \}_{\alpha=0,1,\dots}$ in~$\mathcal{H}_D$. The properties of such a set of states determine if we consider the data as quantum or classical. Specifically, classical data correspond to orthogonal states~$\langle{d_{\alpha}}\ket{d_{\beta}}=\delta_{\alpha \beta}$ that can be mapped to the computational basis by a known transformation~$U_D$. Notice that the data elements in~$\qdb{k}$ do not have to be chosen uniquely, meaning that it is not necessary that~$\alpha_j \neq \alpha_i$ for~$j \neq i$ in~\eqref{eqn:general-qdb}. 
The number of qubits is chosen to be the smallest integer that fulfills~$\Tilde{k} = \lceil \log_2(k) \rceil$. By $\lceil x \rceil$ we denote the ceiling function that returns the smallest integer greater than or equal to~$x$. We consider the different scenarios for orthogonal~$|d_j \rangle_D$ and non-orthogonal data entry states~$|\Tilde{d}_j \rangle_D$, discuss the former situation in detail and then present an outlook for the operations on the latter.

The rest of this article is structured as follows. We start by introducing the case of classical indexing in Section~\ref{sec:cc} and contrast it with the case of quantum indexing in Section~\ref{sec:qc}. For the case of quantum indexing and classical data states, we formally define the operations: 
\begin{center}
    \text{\textit{prepare}, \textit{extend}, \textit{remove},} \\ \text{\textit{write}, \textit{read-out} and \textit{permute}} \nonumber
\end{center}
and discuss their algorithmic implementation as summarized in Table~\ref{table:mr}. We then focus on the quantum database extension algorithm presented in Section~\ref{sec:extension_chapter}. Following that, we summarize implications on the defined operations in the case of quantum data in Section~\ref{sec:qq} as some of the operations are equivalent to the classical data case and conclude this work in Section~\ref{sec:conclusion}.

\section{Data structured by classical indices~(CC and CQ)}

\label{sec:cc}
In the classical computer science domain, database models encompass structured frameworks for efficiently organizing, storing, and managing data, including hierarchical, network, relational, and object-oriented models. Briefly, the hierarchical model arranges data in a tree-like structure, while the network model uses a graph structure to handle complex many-to-many relationships. The relational model stores data in normalized tables, which are accessed via structured query languages (see, e.g.,~\cite{date1989guide}), offering robust querying and indexing capabilities. Database models emphasize the importance of individual data relationships in maintaining the integrity of data and facilitating complex queries. Due to the maturity of this field, we focus on a form of relational database structures that organize and access the data by an integer index. The classical setup is well known and commonly used, and thus, we do not go further into detail but highlight the relevant quantum extension from now on.\newline
\noindent 
If we consider quantum data accessed by classical indexing, we think of the following setup:
\begin{align}
    \bigotimes_{i=0}^p \ket{{d}_i}_D \, \in \left( \mathcal{H}_D \right)^{\otimes p}\,,
    \label{eqn:linear_cq} 
\end{align}
where the data element~$\ket{d_i}_D$ associated with the index~$i$ resides in a dedicated register. In~\eqref{eqn:linear_cq} there are~$p$ individual registers consisting each of~$\Tilde{m}$ qubits such that~$\dim (\mathcal{H}_D)=2^{\Tilde{m}}$. The data may be drawn from any orthogonal or non-orthogonal set of states. For this approach, operations that act on a system as given in~\eqref{eqn:linear_cq} are well defined and can be formulated as the individual registers are in product form. An extension corresponds to adding a new quantum state to the system. One can swap individual registers without creating entanglement. Due to the product form of~\eqref{eqn:linear_cq}, individual registers can be traced out (and thus neglected and removed) without affecting the remaining states. This setup is based on a linear number of quantum resources. Due to this \textit{uncompressed} way of storing the data, the framework of~\eqref{eqn:linear_cq} is inefficient in the number of resources corresponding here to the number of qubits. Thus, in the following section, we consider the case of \textit{quantum indexing}. 

\begin{center}
\vspace{-4mm}  
\begin{table}[h!]
\caption{Set of defined quantum database~(QDB) operations that we consider to be essential QDB operations. We summarize their respective actions on the QDB state and draw an analogy to operations in the classical database~(DB) scenario to illustrate possible correspondences.}
\centering
\begin{tabular}{@{}p{0.16\textwidth}p{0.5\textwidth}@{\hspace{1cm}}p{0.1\textwidth}*{4}{l}@{}}
\toprule
QDB operation & QDB action & Name & DB Analog \\
\midrule
Prepare & Prepare empty QDB with $k$ index elements. & $P_{(k)}$ & CREATE \\ 
Extend & Extend $\qdb{k}$ by $l$ new index elements. & $E_{(l)}$ & RESERVE \\ 
Remove & Remove index $f$ with data element from $\qdb{k}$. & $R_{(f)}$ & DELETE \\ 
Write & Write data element with specific index $f$. &  $W_{(f)}$ & INSERT \\ 
Read-out & Read-out the data element indexed by $f$. & $G_{(f)}$ & SELECT \\ 
Permute & Permute the index elements based on $\pi$. & $\text{P}_{\pi}$ & - \\ 

\bottomrule
\end{tabular}
\label{table:mr}
\vspace{-9mm}
\end{table}
\end{center}

\section{Data structured by quantum indices}

When the index is itself a quantum state, we allow for superpositions of indices, each correlated to a data state. In general, we consider balanced superposition states as in~\eqref{eqn:general-qdb}. For some of the algorithms introduced below, we exclude the~$\ket{0}_I \,$-index state from being balanced as it forms a reference and is used as a \textit{probability reservoir}. This term comes from the fact that measuring the index register collapses the database to a single index-data pair with a probability given by the squared absolute value of the amplitude $\psi_j$.

\subsection{Classical or \textit{cloneable} data (QC)}
\label{sec:qc}
In this section, we consider classical data states indexed by quantum indices. As a reminder, by~$\ket{d_j}$ we denote a state non-uniquely drawn from an orthogonal set of states~$\{ d_{\alpha} \}_{\alpha}\in \mathcal{H}_D$. We assume that \textit{classical}, or more precisely \textit{cloneable}, data is defined as being drawn from the aforementioned set of orthogonal states, and it is given that we know the unitary transformation~$U_D$ that maps each element in this set to the computational basis.

\subsubsection{Preparation of an empty QDB}
\label{sec:prepare}
To prepare a database with balanced, uniformly distributed amplitudes, quantum indexing, and an empty data register, we apply the following operation:
\begin{align}
    & \mathcal{H}_I \otimes \mathcal{H}_D \rightarrow \mathcal{H}_I \otimes \mathcal{H}_D \nonumber \\
    &|0 \rangle_{I \otimes D}  \xmapsto{P_{(k)}} \label{eqn:prepare}  \frac{1}{\sqrt{k}} \sum_{j = 0}^{k-1} |j \rangle_I |0 \rangle_D =: |\text{QDB}^{(k)}_{\text{empty}}\rangle \, . 
\end{align}
In~\eqref{eqn:prepare},~$P_{(k)}$ is particularly simple when~$\log_2(k) \in \mathbb{N}_{>0}$.
This case corresponds to the Walsh-Hadamard transform applied only to the index register~$I$ with a circuit depth of~$\mathcal{O}(1)$. The indices remain in a product state with respect to the empty data register~$D$. Thus, the operation is given as follows:
\begin{equation}
    P_{(k)} = H^{\otimes \log_2(k)} \otimes \mathbb{I}_D \, .
    \label{eqn:database_preparation}
\end{equation}
By creating the state $\qdb{k}_{\text{empty}}$, we have prepared an empty database with~$k$ indices in superposition. All indices are correlated to the empty data string given by~$|0\rangle_D$.
The general case of the preparation operation for a balanced superposition state with~$k$ indices and $\log_2(k) \notin \mathbb{N}_{>0}$ is presented in Appendix~\ref{sec:prepare_general} together with the case including the probability reservoir. The former case without a probability reservoir has been recently discussed in~\cite{shukla2024efficient}.

\subsubsection{Extend the QDB by new indices correlated to an empty data string}
\label{sec:extension_chapter}

\begin{figure}
    \centering
    \includegraphics[width=0.4\textwidth]{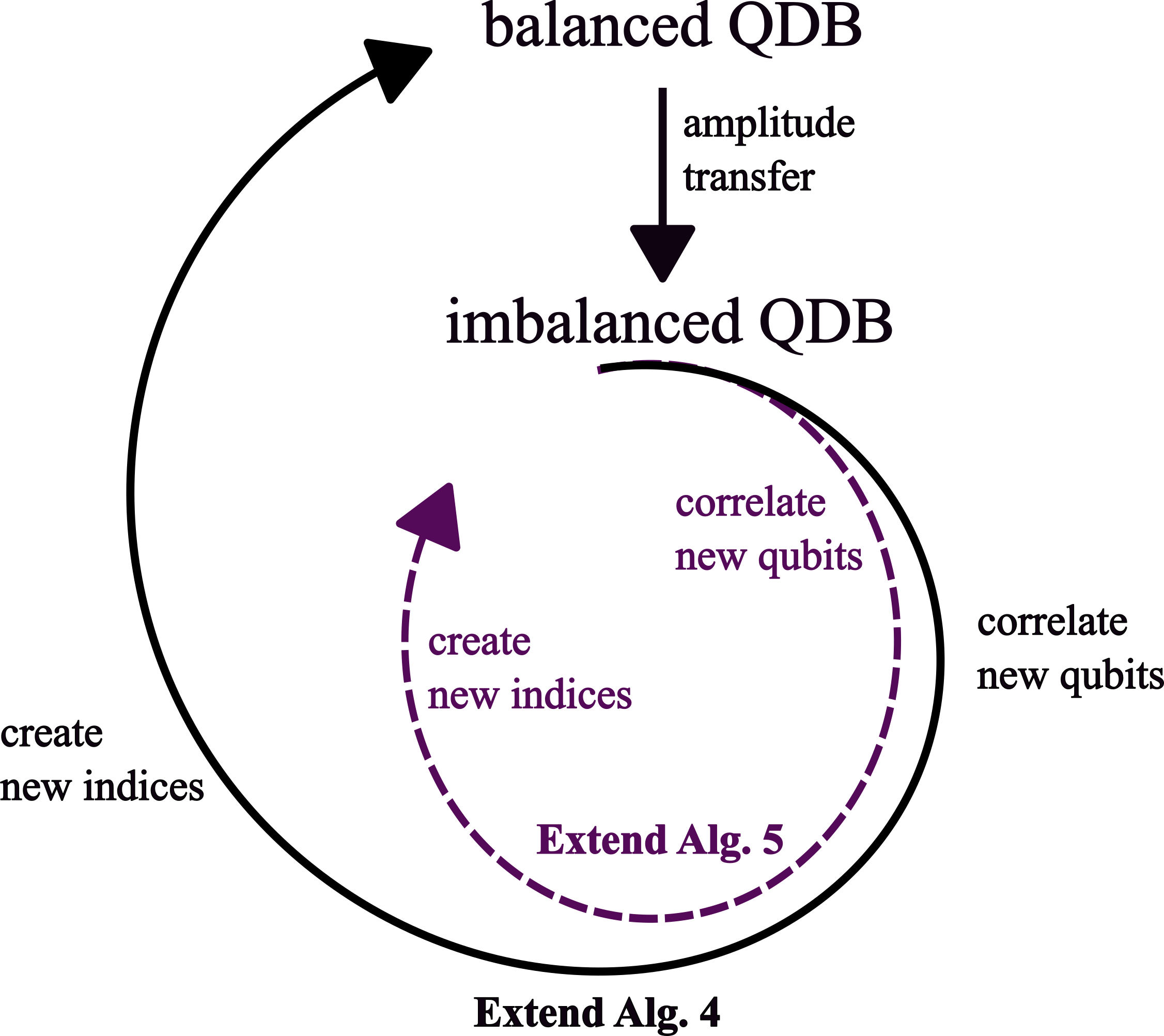}
    \caption{Schematic of the quantum database extension procedure consisting of adding new qubits and creating new indices that are correlated to the empty data string. In contrast to Algorithm~\ref{alg:combined_transfer}, Algorithm~\ref{alg:imbalanced_extend} does not include the \textit{amplitude transfer} step and uses a specifically imbalanced database with the zero-index state acting as a probability reservoir. Both protocols are described in detail in Section~\ref{sec:extend_database}. }
    \label{fig:schematic-extend}
\end{figure}
The operation of extending the quantum database corresponds to expanding the number of indices in the superposition and, crucially, correlating them to the empty data string. This operation seems straightforward, but we will see that the possibility of having entanglement between index and data registers is very consequential. In general, we may correlate a number of new ancilla qubits to the index basis for the index extension procedure and aim to add as many new indices as possible. If the database initially holds a superposition of~$k$ indices, adding a qubit to the index basis doubles the maximal number of possible indices to~$2k$. Next to the demand that the new indices be correlated to the empty data string, the database should remain in a balanced superposition with uniformly distributed amplitudes. The extension operation for adding~$l>0$ new elements is described by:
\begin{align}
    &\mathcal{H}_I \otimes \mathcal{H}_D  \longrightarrow \mathcal{H}_{I'}  \otimes \mathcal{H}_D := \left(\mathcal{H}_{0} \otimes \mathcal{H}_{I} \right)  \otimes \mathcal{H}_D \nonumber \\
    &\qdb{k} = \frac{1}{\sqrt{k}} \sum_{j=0}^{k-1}  \ket{j}_I \ket{d_j}_D \label{eq:extend} \\ 
    &\xmapsto{E_{(l)}} \qdb{k+l} = \frac{1}{\sqrt{k+l}} \sum_{j=0}^{k+l-1}  \ket{j}_{I'} \ket{d_j}_D \, , \nonumber 
\end{align}
with 
\begin{align}
    \nonumber |d_j\rangle_D = | 0 \rangle_D \text{ for } \forall j > (k-1) \text{ and } j = 0 \, . 
\end{align}
The transformation in~\eqref{eq:extend} is non-trivial to define as there is no general unitary map~$E_{(l)}$ fulfilling it. A proof of this fact is presented in Appendix~\ref{sec:extend_problem}. Thus, the unitary operation~$E_{(l)}$ depends on the specific database and is not generally applicable. 

In the simplest case of~\eqref{eq:extend} that still requires adding an extra index register~$\mathcal{H}_0$, we add a single qubit to the index basis and consider~$\log_2(k) \in \mathbb{N}_{>0}$, that forms an initial database with all available indices already in use. Thus, we would have $\dim(\mathcal{H}_0) = 2$ and $\mathcal{H}_{I'} = \mathcal{H}_0 \otimes \mathcal{H}_{I}$. In this case, we can add~$l$ new index elements with~$0< l \leq k$. The data string corresponding to the newly added indices is empty.
\noindent
We present two different protocols for the extension of the quantum database:
\begin{enumerate}
    \item \textit{Insert a QDB containing data, transfer amplitude to zero-index, and create new indices.} 
    \item \textit{Initialize an empty database with an adjusted amplitude distribution, write data in the data registers, and create new indices afterward.}
\end{enumerate}
Regarding the first version, the quantum index extension protocol consists of two steps: an amplitude transfer operation to ensure the amplitudes remain balanced and an index creation step. The amplitude transfer operation is built on the generalized amplitude amplification algorithm with modified step sizes as presented by Brassard~\textit{et al.}~\cite{brassard2002quantum}. The modified amplitude amplification algorithm generally allows us to find a good solution with certainty~\cite{brassard2002quantum} and can be adapted to set a known amplitude to any other desired value. As this modification allows the transfer of small amplitudes onto a specific subspace, we may use it to re-balance the amplitudes of the superposition. Specifically, we transfer amplitude to the zero index state that we use as a \textit{probability reservoir}. As this state is correlated to the empty data state, we create new indices from it based on the generalized prepare method described in Section~\ref{sec:prepare_general} after correlating an ancilla register. The whole procedure is costly as it involves knowing and applying the unitary~$\mathcal{U}_{\qdb{k}}$ (and its inverse~$\mathcal{U}_{\qdb{k}}^{\dagger}$) that creates the whole database (including the data elements) during the amplitude transfer step. There is a need for a less demanding approach that is not based on amplitude amplification.

The second protocol starts from an unbalanced database and can add new elements only until the probability reservoir is fully depleted. Thus, it is important to know an estimate or at least an upper bound on the number of data elements that one desires to store in the database~(denoted here by~$k+l$)
This version initializes a database with a \textit{higher} amplitude on the zero index state. Next, the data registers are filled, an ancilla register is added, and new index states are created that are correlated to the empty data state. 
In the Appendix~\ref{sec:extend_database}, we go into the details for both variants of the extension protocol. The second approach appears more practical, and we observe that it uses a controlled \textit{Prepare} operation. Thus, any improvements in the implementation of that operation directly translate to an improvement of the \textit{Extend} operation, too. 
\subsubsection{Removing an index and corresponding data element from the QDB}
Similarly to adding new indices to the database, we aim to define an operation that allows removing indices and the respective indexed data element from the database. When the index to remove is not the highest one, we obtain a database with non-continuous indexing. If this is undesirable, one can subsequently apply the \textit{permute} operation to do re-indexing appropriately. In general, the removal operation, which we name~$R_{(f)}$, reduces the number of indices with a nonzero amplitude from the superposition, and the state would be re-normalized. In its simplest form, this operation would, e.g., remove the index state~$f$ and the corresponding data register from the superposition. Thus, by index removal, we understand the following map:
\begin{align}
     \nonumber &\mathcal{H}_I \otimes \mathcal{H}_D  \longrightarrow \mathcal{H}_I \otimes \mathcal{H}_D \\
     &\qdb{k} = \frac{1}{\sqrt{k}}\sum_{j=0}^{k-1} \ket{j}_I \ket{d_j}_D \label{eqn:index_removal} \xmapsto{R_{(f)}} \frac{1}{\sqrt{k-1}} \sum_{\substack{j=0 \\ j \neq f}}^{k-1}  \ket{j}_I \ket{d_j}_D\,,
\end{align}
with $\text{dim}(\mathcal{H}_I) = k$. Similarly to the extension operation presented in~\eqref{eq:extend}, the transformation~$R_{(f)}$ presented in~\eqref{eqn:index_removal} cannot be implemented in general by a unitary transformation. The proof is similar to the one of the extension presented in~\ref{sec:extend_problem} and can be intuitively understood considering the case of removing all indices and their data. This would correspond to an erasure operation that is clearly not unitary since it is a many-to-one function. Thus, the operation depends on the specific database and its indexed orthogonal data elements. We consider the case where the element~$| f \rangle_I |d_f \rangle_D$ is removed. To keep the superposition balanced, we need to take into~account the re-normalization of the amplitudes, i.e.,~$\frac{1}{\sqrt{k}} \mapsto \frac{1}{\sqrt{k-1}}$ as in~\eqref{eqn:index_removal}. In a quantum information setting, the no-cloning theorem applies~\cite{wootters1982single} for a general quantum state. Thereof follows the no-deletion theorem~\cite{pati2006no}. If one aims to remove an index from the superposition, one intends to achieve the following reset operation:
\begin{equation}
    |j\rangle_I |d_j \rangle_D \mapsto |j\rangle_I |0 \rangle_D\,.
\end{equation}
This can be achieved by knowing the state preparation unitary $|d_j \rangle_D = U |0 \dots 0 \rangle_D$. Since in this section we consider classical data in the form of orthogonal data states, another possible option would be to correlate a sensor state equivalent to~$|d_j \rangle_D$ and delete the data in register~$D$ utilizing CNOT gates with additional controls on the index register being in~$j$. This procedure only works for orthogonal states with a known unitary that maps the orthogonal basis states to the computational basis, but this is assumed to be our definition of classical data. Next, the amplitude of the ``resetted'' index state can be transferred back to the~$\ket{0}_{I \otimes D}$ state. This requires knowing the specific index element~$j$ that is removed. This operation has the following effect:
\begin{equation}
    (\ket{0}_I + |j\rangle_I ) \, |0 \rangle_D \mapsto |0\rangle_I |0 \rangle_D \,.
    \label{eqn:reset_index}
\end{equation}
The transformation described in~\eqref{eqn:reset_index} can be achieved by a specific rotation in the two-dimensional space spanned by~$\text{span}\{ \, \ket{0}_I|0  \rangle_D, \, \ket{j}_I|0 \rangle_D \}$. Thus, it is a non-trivial operation to remove elements in the quantum superposition state and requires specific data-dependent operations for each state that fulfills the requirements listed above. Finally, we comment on a non-deterministic implementation that works in the general case but has a finite probability of fully erasing the database. One can measure the index with a binary observable with an eigen-subspace corresponding to the span of the indices to remove and their respective orthogonal subspace. This solution may be acceptable when the database is easy to prepare or multiple copies are available but is not scalable.

\subsubsection{Writing data in the QDB}
\label{sec:writing_data}
In the following, we consider the process of writing individual \textit{data} elements~$d_j$ (with~$1 \leq j \leq (k-1)$) in the data register that are each associated with a specific index in the index register. We assume the data register states associated with the new elements are initially empty, as is the case after the \textit{Extend} operation, e.g., the all-zero state~$|0\rangle_D$. By \textit{writing} the data element~$\ket{d_f}_D$ indexed by~$f$ we understand the following operation:
\begin{align}
    \nonumber 
     &\mathcal{H}_I \otimes \mathcal{H}_D   \longrightarrow \mathcal{H}_I \otimes \mathcal{H}_D \\
     &\qdb{k} = \qdb{k}_{\neq f} + \frac{1}{\sqrt{k}}  \ket{f}_I \ket{0}_D \label{eqn:general_writing} \xmapsto{W_{(f)}} \qdb{k}_{\neq f} + \frac{1}{\sqrt{k}} \ket{f}_I \ket{d_f}_D\,, 
\end{align}
with 
\begin{align}
    |\text{QDB}^{(k)}_{\neq f}\rangle := \frac{1}{\sqrt{k}} \sum_{\substack{j =0 \\ j \neq f}}^{k-1} \ket{j}_I \ket{d_j}_D \, . \nonumber 
\end{align}
For the case of data states drawn from an orthogonal set with a known unitary transformation to the computational basis, we can use a writing procedure based on CNOT gates. Conceptually, we can view the writing process based on the sensor state~$\ket{d_f}_S$ as follows. We transform the orthonormal basis states to the computational basis and apply CNOT gates to copy the data into the superposition of the quantum database. Thus, the operation is schematically given as:
\begin{align}
    W_{(f)} \Big{(}  \Big{(} \qdb{k}_{\neq f} + \frac{1}{\sqrt{k}} \ket{f}_I \ket{0}_D \Big{)} \ket{d_f}_S \Big{)} \label{eqn:schematic_write} \mapsto \left( \qdb{k}_{\neq f} + \frac{1}{\sqrt{k}}  \ket{f}_I \ket{d_f}_D \right) \ket{d_f}_S \, ,
\end{align}
with %
\begin{align}
    W_{(f)} = \left(\left(\mathbb{I}_{I } \otimes U_D \otimes U_D \right) \,\Lambda_{\tau}(\sigma_X)\, \left(\mathbb{I}_{I} \otimes U_D^{\dagger} \otimes U_D^{\dagger} \right)\right) \, . \nonumber
\end{align}
The central operation~$\Lambda_{\tau}(\sigma_X)$ consists of~$\Tilde{m}$ multi-controlled X gates, each controlled on~$\tau :=  \Tilde{k} + 1$ qubits. The gate~$\Lambda_{\tau}(\sigma_X)$ behaves like a CNOT gate with~$\tau$-control qubits. An adapted implementation of~$\Lambda_{\tau}(\sigma_X)$ is presented in~\cite{park2019circuit}. Decompositions of~$\Lambda_{\tau}(\sigma_X)$ are considered in~\cite{Barenco_1995}. The operation defined in~\eqref{eqn:schematic_write} keeps the additional system~$S$ in the product with respect to the database. This allows for tracing out and removing the subsystem~$S$ without any information loss in the database, as no entanglement is created between the QDB and system~$S$. \newline

One may wonder why a conditional SWAP operation is not used to write data in the database~(compare to~\cite{liu2023quantum_memory}, where a similar consideration is done for a \textit{read} operation with respect to a quantum memory, here a quantum processing unit does not necessarily interact with an external system) We note that, upon availability, the dedicated hardware-structure of a QRAM may be used to generate the final state in~\eqref{eqn:schematic_write}. If we condition the SWAP operation on a specific index~$f$, we obtain the following:
\begin{align}
    &W_{(f)} \Big{(} \frac{1}{\sqrt{k}}\Big{(} \sum_{\substack{j=0 \\ j \neq f}}^{k-1} \ket{j}_I \ket{d_j}_D +   \ket{f}_I \ket{0}_D \Big{)}  \ket{d_{f}}_S \Big{)} = \frac{1}{\sqrt{k}} \left( \sum_{\substack{j=0 \\ j \neq f}}^{k-1} \ket{j}_I \ket{d_j}_D  \ket{d_{f}}_S + \ket{f}_I \ket{d_{f}}_D \ket{0}_S \right) \nonumber \\ & \neq \left( \frac{1}{\sqrt{k}}\sum_{\substack{j=0 \\ j \neq f}}^{k-1} \ket{j}_I \ket{d_j}_D + \frac{1}{\sqrt{k}}  \ket{f}_I \ket{d_{f}}_D \right)  \ket{d_{f}}_S \, . 
\end{align}
In general, this state does not remain in product form due to entanglement between the database and system~$S$. Hence, one either accepts a system growth by keeping the ancilla state or takes a process of disentangling~$S$ from the database into account. If only classical correlations are present, one may rely on the disentangling operation introduced in~\cite{cortese2018loading}. Thereby, an ancilla qubit is \textit{decorrelated} from the system by adding an ancilla qubit in the state~$\ket{0}$ and applying a combination of (multi-)controlled CNOT gates. The general process of disentanglement for systems, including non-classical correlations, is further described in~\cite{mor1999disentangling, dodd2004disentanglement, berta2018disentanglement, bandyopadhyay1999disentanglement} and shortly summarized in Appendix~\ref{sec:disentangling_quantum_states}.
\subsubsection{Read-out data in the QDB}
The quantum database contains data elements indexed by orthogonal states that one aims to process in some form. On the one hand, one can directly apply a quantum algorithm on the database state itself, such as Grover's search algorithm for unstructured search~\cite{grover1996fast}, and operate on, and thus, modify the entire QDB state. On the other hand, one may think of a read-out operation that does or does not affect the remaining data states in the QDB. A read-out operation could have the following effect with~$| d_0 \rangle_D 
\neq | d_j \rangle_D $ for some~$0 < j \leq (k-1)$ and an ancilla system with the same dimension as the data register~($\dim(\mathcal{H}_D) = \dim(\mathcal{H}_A)$):
\begin{align}
     \qdb{k} | 0 \rangle_A \not \mapsto   \qdb{k} | d_j \rangle_A\,.
     \label{eqn:read_out} 
\end{align}
One correlates the ancilla system~$A$ to the database in order to transfer the data contained in the $j$th data state to the ancilla system. The transformation in~\eqref{eqn:read_out} is not possible in general due to linearity. Instead, an exemplary copying procedure based on CNOT gates for orthogonal states would result in the following transformation:
\begin{align}
    & \mathcal{H}_I \otimes \mathcal{H}_D \otimes \mathcal{H}_A  \rightarrow \mathcal{H}_I \otimes \mathcal{H}_D \otimes \mathcal{H}_A \nonumber \\ 
     &\qdb{k} | 0 \rangle_A \label{eqn:read_out2} \xmapsto{G_{(f)}}  |\text{QDB}^{(k)}_{\neq f}\rangle | 0 \rangle_A +  \frac{1}{\sqrt{k}}|f \rangle_I | d_f \rangle_D | d_f \rangle_A \, , 
\end{align}
again with 
\begin{align}
    |\text{QDB}^{(k)}_{\neq f}\rangle := \frac{1}{\sqrt{k}} \sum_{\substack{j =0 \\ j \neq f}}^{k-1} \ket{j}_I \ket{d_j}_D \, . \nonumber 
\end{align}
In~\eqref{eqn:read_out2}, the ancilla system~$A$ gets entangled with the quantum database. For orthogonal data states, one could proceed with the copying procedure described in~\eqref{eqn:read_out2} for every data element in the QDB and essentially would create a \textit{copied} data register in a superposition. This state looks as follows:
\begin{align}
    \frac{1}{\sqrt{k}} \sum_{j=0}^{k-1} \ket{j}_I \ket{d_j}_D \ket{d_j}_A \in \mathcal{H}_I \otimes \mathcal{H}_D \otimes \mathcal{H}_A \, .
    \label{eqn:copied_data}
\end{align}
To obtain the state in~\eqref{eqn:copied_data}, one would apply~$\Tilde{m}$ CNOT gates that are controlled by the $i$th qubit in the data register~$D$ and apply the controlled not operation on the $i$th qubit in the ancilla register~$A$. One notices in~\eqref{eqn:copied_data} that the registers $D$ and $A$ are also entangled if at least two states in~$A$ are different, i.e., $\ket{d_j}_A \neq \ket{d_i}_A$ for some $i \neq j$.
\newline
Another possibility to define the read-out of a data state would be via a \textit{projective measurement}. Thereby, we aim to do the projective measurement~\cite{nielsen2001quantum} on the index register and return the corresponding data state indexed by~$f$:
\begin{align}
    |d_f \rangle_D = \frac{\hat{P}_{\ket{f}} \qdb{k}}{\sqrt{\langle \text{QDB}^{(k)}| \hat{P_{\ket{f}}} \qdb{k}}} \, .
    \label{eqn:proj_mmt}
\end{align}
The idempotent projection operator $\hat{P}_{\ket{f}}$ is given by 
\begin{align}
    \hat{P}_{\ket{f}} = \ket{f} \langle f |_I \otimes \mathbb{I}_D \, .
\end{align}
The resulting state of this projective measurement is the data state~$|d_f \rangle_D$ corresponding to index~$f$, which is retrieved with a probability of~$\frac{1}{k}$. Hence, with the given probability of~$\frac{1}{k}$ we obtain the data state~$|d_f \rangle_D$ corresponding to index~$j$. This projective measurement essentially returns a specific data state corresponding to an index and thus forms a query to the quantum database. This query procedure returning the state in~\eqref{eqn:proj_mmt} ``consumes'' the whole database. This is the case as the applied projective measurement collapses the superposition state that forms the quantum database onto a single state. An interesting application could be that the qubits belonging to the index and the qubits belonging to the data registers could be located at different spatial positions. Hence, a projective measurement on the index register would simultaneously collapse the data register at another physical location~\cite{epr_paper}.

\subsubsection{Permute data in the QDB}
\label{sec:permute_elements}
For this operation, we would like to permute database elements according to their indices. We notate this operation by~$P_{\pi}$ and the permutation~$\pi(\cdot) \in S_k$. Thereby, $S_k$ denotes the symmetric group over a finite set of~$k$ elements. We write~$P_{\pi}$ for a quantum database with~$k$ elements as follows:
\begin{align}
    \frac{1}{\sqrt{k}} \sum_{j=0}^{k-1} | j \rangle_I | d_j \rangle_D \xmapsto{P_{\pi}} \frac{1}{\sqrt{k}} \sum_{j=0}^{k-1} | \pi(j) \rangle_I | d_{j} \rangle_D \label{eqn:permutation} = \frac{1}{\sqrt{k}} \sum_{j=0}^{k-1} | j \rangle_I | d_{\pi^{-1}(j)} \rangle_D \, . 
\end{align}
A general permutation~$\pi(\cdot) \in S_k$, or its inverse~$\pi^{-1}(\cdot)$, can always be decomposed into transpositions~$(i, j)$ as two-element transpositions generate the group~$S_k$. We denote a transposition of indices~$(i,j)$ by:
\begin{align}
    \label{eqn:permutation2}
    &i \sigma_x^{i,j} \in \text{SU}(k) 
\end{align}
with the individual entries given by 
\begin{align}
    \left( \sigma_x^{i,j}\right)_{n,m} := \, \delta_{n,m} -  \,\delta_{n,i}  \, \delta_{m,i} - \, \delta_{n,j}  \, \delta_{m,j} +  \,\delta_{n,j} \,  \delta_{m,i} + \, \delta_{n,i}  \, \delta_{m,j} \, .
\end{align}
In the special case with $\text{dim}(I)=4$, one obtains~$P_{(2,3)} = \text{CNOT}$. For every transposition~$(i,j)$ with~$0 \leq i,j \leq (k-1)$, one applies the following operation to the database
\begin{equation}
    P_{(i,j)} := (i\sigma_x^{i,j})_I \otimes \mathbb{I}_D \, .
    \label{eqn:adjacent_transposition}
\end{equation}
By concatenating adjacent transpositions as defined in~\eqref{eqn:adjacent_transposition}, one may generate any permutation~$\pi(\cdot) \in S_k$. A general permutation on the index register~$\pi^{-1}(\cdot)$ as presented in~\eqref{eqn:permutation} can be represented by its permutation matrix $M_{\pi^{-1}} = M_{\pi}^t \in U(k)$. By~$M_{\pi}^t$ we denote the transpose of~$M_{\pi}$. Hence, we have:
\begin{equation}
    P_{\pi} := M_{\pi} \otimes \mathbb{I}_D \in U(k \cdot m) \, .
\end{equation}
As mentioned before, any permutation~$\pi (\cdot)$ can be decomposed into (adjacent) transpositions, which can be implemented by~$P_{(i,i+1)}$ defined in~\eqref{eqn:adjacent_transposition}. It is important to observe that this construction produces a possible decomposition and not the most efficient one. For example, consider the action of a single Pauli-x gate on an index qubit. This changes all indices (by flipping one bit in their binary representation) and may be used as an additional permutation in the decomposition of~$\pi ( \cdot )$ in addition to transpositions. Special cases of the permute operation~$P_{\pi}$ appear in the classical scenario, such as a \textit{shift} or \textit{swap} operation.

\subsection{Outlook towards Quantum Data~(QQ)}
\label{sec:qq}
If we consider the bottom-right quadrant in Figure~\ref{fig:quadrants}, the database contains both quantum indices and quantum data. We defined quantum data as being drawn from a non-orthogonal or orthogonal set of states with an \textit{unknown} unitary transformation to the computational basis. This definition of ``quantum'' data leads to the fact that this type of data generally cannot be copied exactly as perfect cloning is only possible for states belonging to a known set of orthogonal states~\cite{bruss2007lectures}. Thus, for a set of orthogonal states with a known unitary transformation, the respective unitary may act in terms of a \textit{key}. It would allow distinct users to interact differently with the same QDB state depending on their individual access to the specific unitary transformation. 
\newline
Furthermore, from the no-cloning theorem~\cite{wootters1982single} follows the non-deletion theorem~\cite{kumar2000impossibility}. This theorem states that even though information can be deleted perfectly in a reversible manner in classical computation with respect to a copy, the analogous task on quantum information cannot be done for an arbitrary quantum state~\cite{kumar2000impossibility}. It holds due to the linearity of quantum theory~\cite{kumar2000impossibility}. 
\newline
The database preparation does not depend on the type of data written in the respective data register. For the extension operation, the first algorithm based on amplitude amplification only works if we know and are able to implement the unitary that prepares the database state. However, the second version of the database extension algorithm works independently of the data type. 
Regarding the permutation operation, one has to take into account that each data element may be a linear combination of (orthogonal) states. Because of this, data in a superposition corresponds to the same index state. One must consider this fact when conditioning an operation as an individual index transposition on a single data register (corresponding to one element in the \textit{data superposition}).
Also, it is different that data can no longer be removed or written in the QDB, as was the case for orthogonal data states based on CNOT gates. For this, alternative methods that are described below can be considered. 
The Read-out operation differs as the operation in~\eqref{eqn:read_out2} cannot be done for non-orthogonal states; instead, one may consider a swap operation with the following effect:
\begin{align}
    & \mathcal{H}_I \otimes \mathcal{H}_D \otimes \mathcal{H}_A  \rightarrow \mathcal{H}_I \otimes \mathcal{H}_D \otimes \mathcal{H}_A \nonumber \\ 
     &\qdb{k} | 0 \rangle_A \label{eqn:read_out_swap} \xmapsto{G_{(f)}}  |\text{QDB}^{(k)}_{\neq f}\rangle | 0 \rangle_A +  \frac{1}{\sqrt{k}}|f \rangle_I |0 \rangle_D | d_f \rangle_A \, .
\end{align}
As before, the database and the ancilla system~$A$ are entangled in~\eqref{eqn:read_out_swap}. A projective measurement as a read-out operation would return a general quantum state. 

Due to the aforementioned limitations, we highlight possible modifications that allow us to operate on quantum data. Firstly, one may consider the application of approximate or probabilistic cloning that we summarize in the following \cite{bruss2007lectures, qiu2002combinations, lemm2017information}. 
In the case of approximate cloning, one creates an approximate copy by means of a unitary transformation. With regard to the non-trivial approximate cloning strategy, the optimal case would be the universal cloner by \textit{Bu\v{z}ek and Hillery} \cite{buvzek1996quantum}. Thereby, the transformation done by a general unitary is the following~\cite{buvzek1996quantum, scarani2005quantum}:
\begin{align}
    \ket{\psi}_A\ket{R}_B\ket{\mathcal{M}}_M \mapsto \ket{\Psi}_{ABM} \, .
    \label{eqn:approx_cloning}
\end{align}
This forms an optimal and symmetric universal quantum cloner~\cite{buvzek1996quantum, scarani2005quantum} with a single qubit ancilla in this simple example. In~\eqref{eqn:approx_cloning}, the qubit to be cloned is the A-system, the system that contains the approximate clone in the B-system and the ancilla marked by $M$. In the optimal case, this transformation obtains the following~\cite{buvzek1996quantum, scarani2005quantum}:
\begin{align}
    \rho_A = \text{Tr}_{BM} (\ket{\Psi}\bra{\Psi}_{ABM}) = \,\rho_B = \text{Tr}_{AM} (\ket{\Psi}\bra{\Psi}_{ABM}) =\, F\ket{\psi} \bra{\psi} + (1-F) \ket{\psi^{\perp}} \bra{\psi^{\perp}} \, , 
\end{align}
with $F = \frac{5}{6}$. Thus, we have a mixed state if one traces out the ancilla in~\eqref{eqn:approx_cloning}. For the case of the quantum database, this procedure means that the database would get entangled with the qubit containing the data to be written in the database. Thus, either one allows the quantum database system to essentially ``grow,'' or one has to rely on a disentanglement mechanism aiming to disentangle the state from the quantum database~\cite{mor1999disentangling, dodd2004disentanglement, berta2018disentanglement, bandyopadhyay1999disentanglement}.
\newline
Secondly, the case of probabilistic cloning describes the procedure of creating a perfect copy of an unknown quantum state with a success probability smaller than 1. In~\cite{prob_cloning_duan}, it was shown that states that were secretly chosen from a set 
\begin{equation}
    \mathcal{S} = \{ \ket{\psi_1}, \ket{\psi_2}, \dots, \ket{\psi_t}   \}
\end{equation}
can be cloned probabilistically by a general unitary reduction operation if and only if the individual states in $\mathcal{S}$ are linearly independent. Thereby, the unitary transformation~$U$ that is later followed by a post-selective measurement operation looks as follows~\cite{prob_cloning_duan, bruss2007lectures}:
\begin{align}
    U (\ket{\psi_i} \ket{0} \ket{A_0}) = \sqrt{p_i} \ket{\psi_i} \ket{\psi_i} \ket{A_0} \label{eqn:prob_cloning}  +\sum_{j = 1}^n c_{ij} \ket{\Phi_j} \ket{A_j}, \text{ for } i = 1,2, \dots, t \, . 
\end{align}
In~\eqref{eqn:prob_cloning}, the ancilla states are all orthogonal to each other, i.e.,~$\langle A_k | A_l \rangle = \delta_{kl}$ for $0 \leq k,l \leq t$. Therefore, the desired perfect clone is achieved with a probability of~$p_i$ when measuring the ancilla in the basis~$\{A_k\}$. In~\cite{prob_cloning_duan}, a simple example of the cloning efficiency for a linearly independent set of two states~$\mathcal{S} = \{ \ket{\psi_1}, \ket{\psi_2}\}$ is given. In this case, the success probabilities are bound as follows~\cite{prob_cloning_duan, bruss2007lectures}:
\begin{align}
    \frac{1}{2} (p_1 + p_2 ) \leq \frac{1}{1+|\langle \psi_1 | \psi_2 \rangle|}
\end{align}
with~$|\langle \psi_1 | \psi_2 \rangle| \neq 1$.
Let us consider this procedure for writing data items in the data states (with all the data elements being part of a linearly independent set). As a simple example of a one-qubit data register, we have the following database state:
\begin{align}
    \frac{1}{\sqrt{k}} \left(\sum_{\substack{j=0\\ j \neq f}}^{k-1}  \ket{j}_I \ket{d_j}_D + \ket{f}_I \ket{0}_D \right) \ket{d_f} \ket{A_0} \, .
\end{align}
Aiming to write in the data register indexed by~$f$, we apply the unitary presented in~\eqref{eqn:prob_cloning} conditioned on index~$f$ and obtain:
\begin{align}
    \frac{1}{\sqrt{k}} \, \left(\sum_{\substack{j=0\\ j \neq f}}^{k-1}  \ket{j}_I \ket{d_j}_D \ket{d_f} \ket{A_0} + \ket{f}_I (\sqrt{p_i} \ket{d_f}_D  \ket{d_f} \ket{A_0} + \sum_{j = 1}^n c_{ij} \ket{\Phi_j'} \ket{A_j})\right) \, .
\end{align}
If one now post-selectively applies a measurement on the ancilla in the basis~$\{A_k\}_k$, measuring~$\ket{A_0}$ we end up successfully in the subspace containing the QDB with the copy of~$\ket{d_f}$ or failed the cloning when measuring the ancilla in any other state than~$\ket{A_0}$. This means that this ancilla measurement either returns the preferred QDB state or a failure state when measuring~$\{\ket{A_j}\}_j$ for~$\forall j \neq 0$.

\section{Conclusion}
\label{sec:conclusion}
We presented an algorithmic framework for quantum databases that distinguishes between classical and quantum indexing and data items. Our definition of a quantum database consists of a uniform superposition state storing data correlated to specific orthogonal index states in a quantum superposition. We have highlighted essential database operations and presented possible implementations focusing on the case of quantum indexing in order to discuss their feasibility. We note that multiple possible implementations exist for each operation, and we leave it as an open problem to refine and improve each of them individually.

The operations we showcase here already allow for general recommendations for best practices when working on a quantum database. For example, for the index extension operation, differently from the classical indexing case, extending the database is very costly, and one should rather set the total (or maximum) number of items to be added in advance and prepare the database accordingly. Furthermore, the available resources should be exhausted by adding all possible index states during the extension operation. We observe that new index states are released in bunches most efficiently, and one should use that to have a low circuit depth. Furthermore, to unlock quantum advantage from using databases, we recommend avoiding projective measurements in order to keep the superposition state and employ its quantum nature algorithmically. 

In this work, we considered a noise-free scenario. However, amplitude noise would affect the balanced amplitudes in the superposition state, especially when many states are present in the superposition, and the amplitudes get small. Bit-flip errors in the computational basis would disturb the correspondence of data and index elements, but one can rely on efficient error correcting codes, see, e.g.,~\cite{bravyi2024high}, that are expected to improve in the future. A detailed study on the impact of noise in the context of a QDB is left for future work. 

We believe it is important to examine the differences between databases in the classical and quantum scenarios due to the naturally different conditions dictated by quantum mechanics. In this context, one needs to rethink even basic operations, which is essential to formalize a database version based on structured quantum data. The presented framework gives a more precise outlook on the practicality and usefulness of each of the operations. Furthermore, for applying a set of algorithms operating on a quantum database before a state collapse~(e.g., through measurement), the presented operations are applicable as a state preparation and allow for dynamical manipulation after initialization of a certain quantum state. 
Looking into the future, this algorithmic framework of a quantum database could become relevant in use cases in which one aims to efficiently operate on (quantum) data in order to extract global properties of a system, which is of possibly dynamic nature.

\section*{Code availability}
The implementations for the main algorithms of this work have been implemented for verification in \texttt{Python} and \texttt{C++} and are publicly available under the MIT license on \href{https://github.com/carlasophie/QDB_algorithms}{GitHub}.

\section*{Acknowledgements}
The authors would like to thank Lorenzo Maccone for fruitful discussions. CR and MG are supported by CERN through the CERN Quantum Technology Initiative. CR has been sponsored by the Wolfgang Gentner Programme of the German Federal Ministry of Education and Research~(grant no. 13E18CHA). CR would like to thank Tobias Duswald for his support throughout this work. 

\bibliography{mybib}{}
\bibliographystyle{quantum}

\onecolumn
\appendix

\section{Appendix}
\subsection{General implementation of the Prepare operation}
\label{sec:prepare_general}

In this section, we generalize the \textit{prepare} operation described in Section~\ref{sec:prepare}. This generalized algorithm allows for preparing a superposition state with balanced amplitudes that contains a number of elements~$k$ for the case of~$\log_2(k) \notin \mathbb{N}$. Furthermore, we present a modified algorithm that may be used for preparing a balanced or uniform superposition state, with the exception that the zero index~$\ket{0}_I$ is used as a probability reservoir relevant with respect to database extension. Hence, we aim to prepare the general state with~$l \in \mathbb{N}$:
\begin{align}
    \sqrt{\frac{l+1}{k+l}} \, \ket{0}_I+\sqrt{\frac{1}{k+l}} \, \sum_{j=1}^{k-1}\ket{j}_I \, .
    \label{eqn:general_prepared_state}
\end{align}

\begin{figure}[b!]
\begin{algorithm}[H]
\caption{Prepare unbalanced superposition of $k$ computational basis states with extra weight for the $\ket{0}$ state: $\sqrt{l+1}/\sqrt{k+l}\ket{0}+1/\sqrt{k+l}\sum_{j=1}^{k-1}\ket{j}$ .}
\begin{algorithmic}[1]
\Require $k >1, \, k \in \mathbb{N} $ and $l \in \mathbb{N}_{\geq 0} $
\Ensure $ 0 \leq p \leq 1$
\Procedure{Prepare}{$k,l$}\Comment{prepare unbalanced superposition}
\State Compute $t=\lceil \log_2 k \rceil$
\State Express $s=k-1$ in binary form: $s=\sum_{j=0}^{t-1} 2^j s_j$
\Comment{one has $s_{t-1}=1$}
\Statex \Comment{note that bit 0 is the least significant bit}
\State Compute $p=\frac{2^{t-1}+l}{k+l}$
\State Apply $Y(p)$ to qubit $t-1$
\For {$j=t-2, \cdots , 0$}
    \State Apply $Y(1/2)$ to qubit $j$
    \If {$s_j=0$}
        \State Apply $Y(1/2)^{-1}$ controlled on $[s]_{t-1, ..., j+1}$
        \Comment{controlled on qubits $t-1, t-2, ..., j+1$}
        \Statex \Comment{being equal to the corresponding bits of $s$} 
    \Else
        \State Compute $p=\frac{2^j}{[s]_{j, j-1, ..., 0}+1}$
        \State Apply $\tilde{Y}(p)$ controlled on $[s]_{t-1, ..., j+1}$
    \EndIf
    \State Compute $p=\frac{2^j+l}{2^{j+1}+l}$
    \State Apply $\tilde{Y}(p)$ controlled on $[0]_{t-1, ..., j+1}$ \Comment{if $l=0$ then $p=1/2$ and $\tilde{Y}(p)=I$}
\EndFor
\EndProcedure
\end{algorithmic}
\label{alg:prepare}
\end{algorithm}
\end{figure}
\noindent
This state reassembles the completely balanced case if~$l =0$. For creating the state in~\eqref{eqn:general_prepared_state} one can use Algorithm~\ref{alg:prepare}. This procedure includes unitary gates~$Y(p)$ and~$\tilde{Y}(p)$ with~$0 \leq p \leq 1, \, p \in \mathbb{R}$ defined below. The gate~$Y(p)$ plays a similar role to the Hadamard gate for~$p=1/2$ when acting on the~$\ket{0}$ state (i.e.,~$H \ket{0} = Y(1/2) \ket{0}$), but, in general, it corresponds to a~$R_y$ rotation. It is defined as follows:
\begin{equation}
    Y(p) = \begin{pmatrix}
        \sqrt{p} & -\sqrt{1-p} \\
        \sqrt{1-p} & \sqrt{p}
    \end{pmatrix} \, .
\end{equation}
such that~$Y(p) = R_y(\theta)$ for $\theta = 2 \arccos{(\sqrt{p})}$ and~$Y(1) = \mathbb{I}$.
Furthermore, based on the previous definition of~$Y(p)$, we define the following in order to simplify the notation
\begin{equation}
    \tilde{Y}(p) := Y(p)\cdot Y(1/2)^{-1} \, .
\end{equation}
\noindent
The implementation of Algorithm~\ref{alg:prepare} presents the most general case. In order to facilitate a more intuitive understanding, we showcase concrete examples. 
\newline
\newline
\noindent
\textbf{Example case of Algorithm~\ref{alg:prepare} for~$k=22$ and~$l =0$ .} \newline
For this, consider the case of initializing up to~$k=22$ with~$l=0$. Hence, the circuit acts on a 5-qubit register~$I$. The goal in this exemplary case is to create a balanced superposition state
\begin{align}
    \frac{1}{\sqrt{22}} \sum_{j=0}^{21} \ket{j}
\end{align} 
in the register~$I$. The first step is expressing~$k-1$ as a binary number, here~$k-1 = 21 =: b10101$.
Then, we apply unitary gates~$Y(p)$ starting from the qubit of the highest significance, noting that certain gates will be controlled on the higher-significance qubits according to the binary representation of~$(k-1)$. In general, for qubit~$a$, consider the~$a$-th bit of~$k-1$ and denote it by~$(k-1)_a$. The initial layer of one-qubit gates creates a balanced superposition except with regard to the splitting of the most significant bit. In the later stages of the algorithm, balanced splitting is reverted in order to achieve an imbalanced distribution. Thereby, probabilities are distributed according to the final number of branches in the respective binary tree diagram. The corresponding quantum circuit for initializing~$k=22$ is shown in Figure~\ref{fig:circuit_10101}. The circuit depth scales linearly according to the number of qubits with respect to multi-controlled CNOT gates. Several works have been written discussing the decomposition of multi-controlled CNOT gates into CNOT gates and single-qubit gates. State-of-the-art methods discuss its decomposition with~$\tau$ denoting the number of controls. The decompositions scale as~$\mathcal{O}(\tau)$~\cite{gidney2024} or even ~$\mathcal{O}(\text{polylog}(\tau))$ by borrowing an ancilla or considering an approximate case~\cite{claudon2024polylogarithmic}. 
\begin{figure}[h!]
    \centering
    \includegraphics[width=0.45\textwidth]{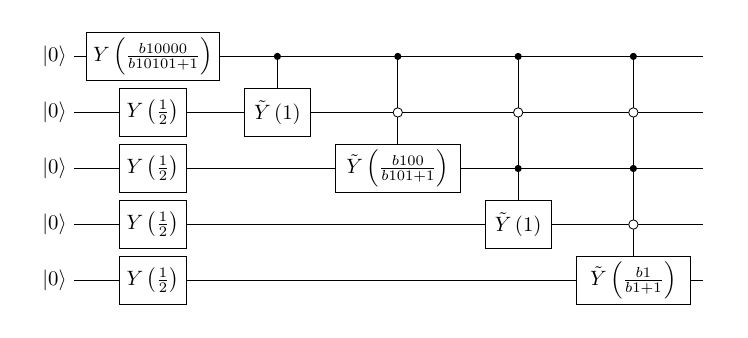}
    \caption{Circuit diagram for creating the balanced quantum superposition state with~$k=22$ elements (where $\log_2(k) \not \in \mathbb{N}$). The presented quantum circuit acts on the index register of the database, and we apply the identity otherwise. The resulting state here is given by~$\tfrac{1}{\sqrt{22}} \sum_{j=0}^{21} \ket{j}$. }
    \label{fig:circuit_10101}
\end{figure}

\noindent
\textbf{Example case of Algorithm~\ref{alg:prepare} for~$k=14$ and some~$l \in \mathbb{N}_{>0}$ .} \newline
Furthermore, we also show an example case for initializing a superposition state with~$l > 0$ and~$k=14$ given by:
\begin{align}
    \sqrt{\frac{l+1}{14+l}} \ket{0} + \frac{1}{\sqrt{14+l}} \sum_{j=1}^{13} \ket{j}.
\end{align} 
We present the binary tree visualization for initializing up to element~$1101$ in Figure~\ref{fig:tree_14}. Each branching step in the binary tree visualization corresponds to applying a~$Y(p)$ gate. If the branching is non-symmetric, symmetric gates must be reverted using controlled~$Y(p)$ gates. The corresponding quantum circuit is shown in Figure~\ref{fig:circuit_1101}. The circuit depth scales linearly according to the number of qubits.

\begin{figure}[h!]
    \centering
    \includegraphics[width=0.55\textwidth]{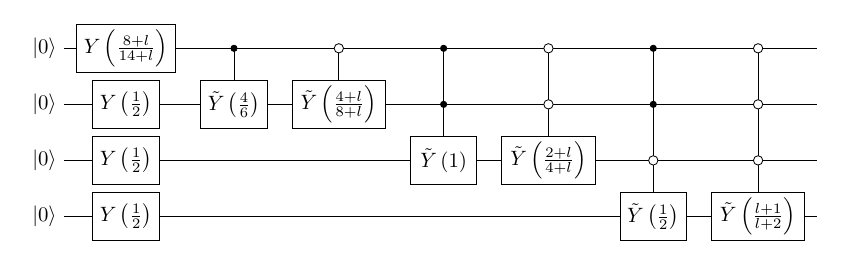}
    \caption{Circuit diagram for creating the  quantum superposition state with~$k=14$ elements (where $\log_2(k) \not \in \mathbb{N}$). The presented quantum circuit acts on the index register of the database, and we apply the identity otherwise. The resulting state here is given by~$ \sqrt{\frac{1+l}{14+l}} \ket{0} +  \sqrt{\frac{1}{14+l}} \sum_{j=1}^{14} \ket{j}$ . }
    \label{fig:circuit_1101}
\end{figure}

\begin{figure}[h!]
    \centering
\scalebox{0.4}{
    \begin{tikzpicture}
    [
        level 1/.style={sibling distance=190mm},
        level 2/.style={sibling distance=95mm},
        level 3/.style={sibling distance=45mm},
        level 4/.style={sibling distance=23mm}
    ]
    	\node {{\color{violet} $1 \cdot$}$\ket{0000}_I$}
    		child {node {\small{\color{violet} $\sqrt{\frac{l+8}{14 + l}} \cdot$}$\ket{0000}_I$}
                child {node {\small{\color{violet} $\sqrt{\frac{l+4}{14 + l}} \cdot$}$\ket{0000}_I$}
                    child {node {\small{\color{violet} $\sqrt{\frac{l+2}{14 + l}} \cdot$}$\ket{0000}_I$}
                        child {node {\small{\color{violet} $\sqrt{\frac{l+1}{14 + l}} \cdot$}$\ket{0000}_I$}}
    		            child {node {\small{\color{violet} $\sqrt{\frac{1}{14 + l}} \cdot$}$\ket{0001}_I$}}
                    }
    		        child {node {\small{\color{violet} $\sqrt{\frac{2}{14 + l}} \cdot$}$\ket{0010}_I$}
                        child {node {\small{\color{violet} $\sqrt{\frac{1}{14 + l}} \cdot$}$\ket{0010}_I$}}
    		            child {node {\small{\color{violet} $\sqrt{\frac{1}{14 + l}} \cdot$}$\ket{0011}_I$}}
                    }
                }
                child {node {\small{\color{violet} $\sqrt{\frac{4}{14 + l}} \cdot$}$\ket{0100}_I$}
                    child {node {\small{\color{violet} $\sqrt{\frac{2}{14 + l}} \cdot$}$\ket{0100}_I$}
                        child {node {\small{\color{violet} $\sqrt{\frac{1}{14 + l}} \cdot$}$\ket{0100}_I$}}
    		            child {node {\small{\color{violet} $\sqrt{\frac{1}{14 + l}} \cdot$}$\ket{0101}_I$}}
                    }
    		        child {node {\small{\color{violet} $\sqrt{\frac{2}{14 + l}} \cdot$}$\ket{0110}_I$}
                        child {node {\small{\color{violet} $\sqrt{\frac{1}{14 + l}} \cdot$}$\ket{0110}_I$}}
    		            child {node {\small{\color{violet} $\sqrt{\frac{1}{14 + l}} \cdot$}$\ket{0111}_I$}}
                    }
                }
            }
    		child {
    		    node {{\color{violet} $\sqrt{\frac{6}{14 + l}} \cdot$}$\ket{1000}_I$}
    		    child {node {{\color{violet} $\sqrt{\frac{4}{14 + l}} \cdot$}$\ket{1000}_I$}
                    child {node {{\color{violet} $\sqrt{\frac{2}{14 + l}} \cdot$}$\ket{1000}_I$}
                        child {node {\small{\color{violet} $\sqrt{\frac{1}{14 + l}} \cdot$}$\ket{1000}_I$}}
    		            child {node {\small{\color{violet} $\sqrt{\frac{1}{14 + l}} \cdot$}$\ket{1001}_I$}}
                    }
    		        child {node {{\color{violet} $\sqrt{\frac{2}{14 + l}} \cdot$}$\ket{1010}_I$}
                        child {node {\small{\color{violet} $\sqrt{\frac{1}{14 + l}} \cdot$}$\ket{1010}_I$}}
    		            child {node {\small{\color{violet} $\sqrt{\frac{1}{14 + l}} \cdot$}$\ket{1011}_I$}}
                    }
                }
    		    child {node {{\color{violet} $\sqrt{\frac{2}{14 + l}} \cdot$}$\ket{1100}_I$}
                    child {node {\small{\color{violet} $\sqrt{\frac{2}{14 + l}} \cdot$}$\ket{1100}_I$}
                        child {node {\small{\color{violet} $\sqrt{\frac{1}{14 + l}} \cdot$}$\ket{1100}_I$}}
    		            child {node {\small{\color{violet} $\sqrt{\frac{1}{14 + l}} \cdot$}$\ket{1101}_I$}}
                    }
    		        child {node {{\color{violet} $0 \cdot$}{\color{gray} $\ket{1110}_I$}}
                        child {node {\small{\color{violet} $0\cdot$}{\color{gray}$\ket{1110}_I$}}}
    		            child {node {\small{\color{violet} $0\cdot$}{\color{gray}$\ket{1111}_I$}}}
                    }
                }
    		};
    \end{tikzpicture}
    }
\caption{Binary tree that visualizes the step-by-step creation of new indices for a database with a probability reservoir. In this case,~$14$ elements are initialized (up to binary element~$1101$) and visualized in a binary tree structure. Each branching point visualizes the application of a (conditional)~$Y(p)$ gate that leads to the creation of a new state per branching, starting from the root towards the tree's leaves.}
    \label{fig:tree_14}
\end{figure}
For the aforementioned exemplary cases we had~$\log_2(k) \not \in \mathbb{N}_{>0}$. If~$\log_2(k) \in \mathbb{N}_{>0}$, the generalized prepare operation acts as the Walsh-Hadamard transformation, see Section~\ref{sec:prepare}. This holds true since~$H \ket{0} = Y(1/2) \ket{0}$. Thus, the algorithm reduces to the balanced prepare operation introduced in Section~\ref{sec:prepare} corresponding to the Walsh-Hadamard transformation, which has a circuit depth of~$\mathcal{O}(1)$. This can be visualized by the circuit identity presented in Figure~\ref{fig:circuit_011}. By applying this circuit, a superposition state with~$4$ elements is prepared.
\begin{figure}[h!]
    \centering
    \includegraphics[width=0.5\textwidth]{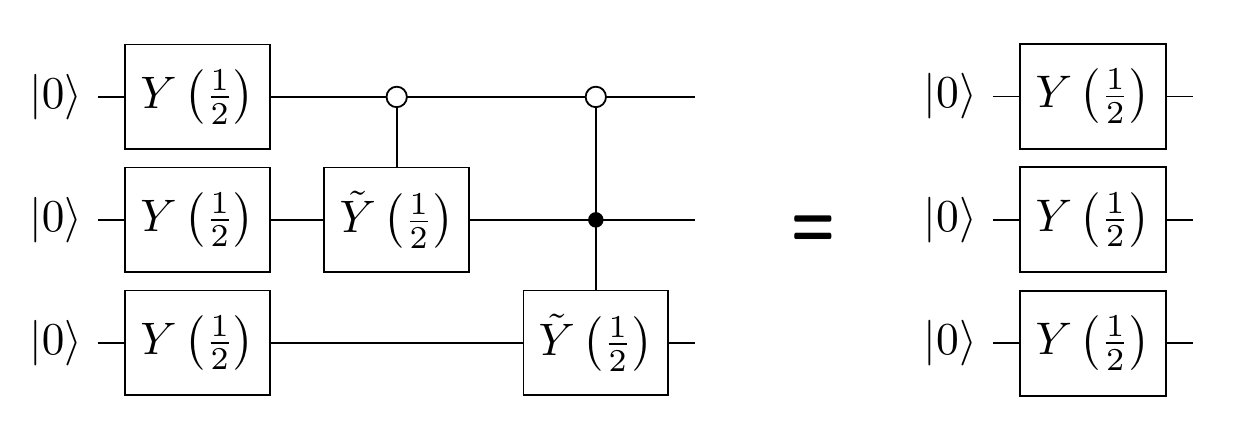}
    \caption{Circuit diagram for creating the  quantum superposition state with~$k=8$ elements (example for $\log_2(k) \in \mathbb{N}$ and~$l =0$). The operation reduces to the Walsh-Hadamard transformation, introduced in the \textit{prepare} operation in Section~\ref{sec:prepare} as the initial state is given by~$\ket{0}^{\otimes 3}$. The presented quantum circuit acts on the index register of the database, and we apply the identity otherwise. The resulting state here is given by~$ \frac{1}{\sqrt{2}} \sum_{j=0}^{7} \ket{j}$ . }
    \label{fig:circuit_011}
\end{figure}

\subsection{Problem with the Extend operation}
\label{sec:extend_problem}

The operation~$E_{(l)}$ described in~\eqref{eq:extend} cannot be implemented by a unitary operation. This holds as the overlap is not preserved by~$E_{(l)}$ (as it would be if~$E_{(l)}$ is unitary) for~$l > 0$. We formally write this observation in the following Lemma. 

\begin{lemma}
    There does not exist a general unitary operation~$E_{(l)}$ with~$l \in \mathbb{N}_{>0}$ that fulfills the transformation in~\eqref{eq:extend} for a general QDB state with~$k\geq 2$ elements.
\end{lemma}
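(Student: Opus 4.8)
The key idea is that a unitary operation preserves inner products between states, so I would prove the lemma by exhibiting a pair of database states whose inner product must change under $E_{(l)}$, contradicting unitarity. The hypothetical map $E_{(l)}$ in~\eqref{eq:extend} sends a normalized database with $k$ elements (prefactor $1/\sqrt{k}$) to one with $k+l$ elements (prefactor $1/\sqrt{k+l}$), while leaving the data states $\ket{d_j}_D$ for $j \le k-1$ unchanged and appending new index-data pairs with empty data strings. Since $l > 0$ and $k \ge 2$, the normalization factor genuinely changes, and this is precisely what will break inner-product preservation.

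\begin{proof}
Suppose, for contradiction, that a single unitary $E_{(l)}$ with $l \in \mathbb{N}_{>0}$ implements the transformation in~\eqref{eq:extend} for every QDB state with $k \geq 2$ elements. Consider two distinct input databases on the same index and data registers,
\begin{align}
    \qdb{k} = \frac{1}{\sqrt{k}} \sum_{j=0}^{k-1} \ket{j}_I \ket{d_j}_D, \qquad
    |\widetilde{\text{QDB}}^{(k)}\rangle = \frac{1}{\sqrt{k}} \sum_{j=0}^{k-1} \ket{j}_I \ket{\tilde{d}_j}_D,
\end{align}
that agree on the index structure but differ in their stored (orthogonal, classical) data elements. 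Their overlap is
\begin{align}
    \langle \widetilde{\text{QDB}}^{(k)} | \text{QDB}^{(k)} \rangle = \frac{1}{k} \sum_{j=0}^{k-1} \langle \tilde{d}_j | d_j \rangle.
\end{align}
After applying $E_{(l)}$, both outputs carry the new prefactor $1/\sqrt{k+l}$, and the newly created index states $\ket{j}_{I'}$ for $j \geq k$ are correlated to the \emph{same} empty data string $\ket{0}_D$ in both databases (as is $j=0$, per the constraint below~\eqref{eq:extend}). Crucially, the stored data $\ket{d_j}_D$ for the original indices are left untouched. Computing the overlap of the images gives
\begin{align}
    \langle \widetilde{\text{QDB}}^{(k+l)} | \text{QDB}^{(k+l)} \rangle = \frac{1}{k+l} \left( \sum_{j=0}^{k-1} \langle \tilde{d}_j | d_j \rangle + l \right),
\end{align}
where the extra $+l$ collects the $l$ new index slots whose identical empty data states each contribute $\langle 0 | 0 \rangle = 1$.
\end{proof}

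The main obstacle, and the step requiring care, is choosing the two input states so that these two overlaps cannot possibly coincide for any fixed $E_{(l)}$. A clean choice is to take the two databases to differ in exactly one data element while sharing the rest; then one overlap equals $\tfrac{1}{k}\big((k-1)+\langle \tilde{d}_f|d_f\rangle\big)$ and the other equals $\tfrac{1}{k+l}\big((k-1)+\langle \tilde{d}_f|d_f\rangle + l\big)$, and a short algebraic comparison shows these are unequal whenever $\langle \tilde{d}_f | d_f \rangle \neq 1$, i.e.\ whenever the two databases genuinely differ, using $l>0$ and $k\geq 2$. Since a unitary must preserve inner products, the existence of such a pair contradicts the assumption that a single $E_{(l)}$ works for all admissible databases, completing the argument. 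I would remark at the end that this non-preservation is the formal reason the \textit{Extend} operation must be database-dependent rather than universal, matching the discussion following~\eqref{eq:extend}.
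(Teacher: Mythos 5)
Your proposal is correct and follows essentially the same route as the paper's own proof: assume $E_{(l)}$ is unitary, apply it to two databases that differ only in their stored data, and show the inner product $\tfrac{1}{k}\bigl(1+\sum_j \langle d_j'|d_j\rangle\bigr)$ cannot equal $\tfrac{1}{k+l}\bigl(1+l+\sum_j \langle d_j'|d_j\rangle\bigr)$ unless the databases coincide. Your specialization to databases differing in a single element $f$ is just a clean instance of the paper's general overlap comparison, so the two arguments are mathematically identical.
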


\begin{proof}
    As before, we denote the newly added ancilla by~$A$, and the index and data register by~$I$ and~$D$, respectively. We extend two individual databases containing different data elements, assuming there exists a unitary~$V := E_{(l)}$ as in~\eqref{eq:extend} with the following effect:
    \begin{align}
        \frac{1}{\sqrt{k}} (\ket{0}_I\ket{0}_D + \sum_{j=1}^{k-1} \ket{j}_I\ket{d_j}_D) \xmapsto{V} \frac{1}{\sqrt{k+l}} \left( \ket{0}_A (\ket{0}_I\ket{0}_D + \sum_{j=1}^{k-1} \ket{j}_I\ket{d_j}_D) + \ket{1}_A \sum_{i=0}^{l-1} \ket{i}_I\ket{0}_D \right) \, , \nonumber \\
        \frac{1}{\sqrt{k}} (\ket{0}_I\ket{0}_D + \sum_{j=1}^{k-1} \ket{j}_I\ket{d_j'}_D) \xmapsto{V} \frac{1}{\sqrt{k+l}} \left( \ket{0}_A (\ket{0}_I\ket{0}_D + \sum_{j=1}^{k-1} \ket{j}_I\ket{d_j'}_D) + \ket{1}_A \sum_{i=0}^{l-1} \ket{i}_I\ket{0}_D \right) \, .\label{eqn:unitary_v}
    \end{align}
    If we assume the map in~\eqref{eq:extend}, which has the effect as given in~\eqref{eqn:unitary_v}, can be implemented by a unitary~$V$, the overlap has to be preserved. Thus, we compare the overlap before and after applying $V$ for the two databases for general data states~$\{\ket{d_j}_D, \ket{d_j'}_D \}_j$ and~$\sum_{j=1}^{k-1} \bra{d_j'} d_j \rangle_D < (k-1)$ in~\eqref{eqn:unitary_v}:
    \begin{align}
        \frac{1}{k} \left( 1 + \sum_{j=1}^{k-1} \bra{d_j'} d_j \rangle_D \right) \neq \frac{1}{k+l} \left( 1 + l + \sum_{j=1}^{k-1} \bra{d_j'} d_j \rangle_D \right) \, , \label{eqn:overlap_general}
    \end{align}
    where we took advantage of the orthogonality of the index states.
    As seen in~\eqref{eqn:overlap_general}, the overlap is not preserved except if and only if~$\sum_{j=1}^{k-1} \bra{d_j'} d_j \rangle_D = (k-1)$, meaning that the two databases must be equal. Hence, there does not exist a unitary~$V$ that does the transformation in~\eqref{eq:extend} for an arbitrary database. 
    
\end{proof}

\subsection{Extension of the quantum database size}
\label{sec:extend_database}
In this section, we present in detail the two extension algorithms summarized in Section~\ref{sec:extension_chapter}. We introduce the individual operations and include the specific algorithmic implementation in Algorithm~\ref{alg:combined_transfer} and Algorithm~\ref{alg:imbalanced_extend}. We remind ourselves that there does not exist a general unitary for the extension operation as in~\eqref{eq:extend}; hence, each proposed extension mechanism is data-dependent as seen in Section~\ref{sec:extend_problem}.
\subsubsection{Transfer and unfold}
First, we present a protocol for extending a database state using amplitude transfer. In general, the operation is defined as:
\begin{align}
    \mathcal{H}_I \otimes \mathcal{H}_D &\longrightarrow  \mathcal{H}_{I'} \otimes \mathcal{H}_D := (\mathcal{H}_A \otimes \mathcal{H}_I)_{I'} \otimes \mathcal{H}_D \nonumber \\
    |\text{QDB}^{(k)} \rangle &\mapsto |\text{QDB}^{(k+l)} \rangle \, .
\end{align}
Thereby, the newly created indices $k \leq i < (k+l) $ are all correlated to the empty data string~$\ket{0}_D$. For the first approach, the amplitudes are adjusted so that the database will be a balanced superposition of equally distributed amplitudes after the index creation procedure is applied (i.e., after~$l$ indices are added, with~$0 <l\leq k$). If we tolerate an error in the amplitude distribution, we may apply the general amplitude transfer algorithm presented by Brassard~\textit{et al.}~\cite{brassard2002quantum}. On the other hand, if the amplitudes have to be transferred \textit{exactly}, one applies the \textit{exact} amplitude amplification algorithm~\cite{brassard2002quantum} that includes two additional parameters~$\phi,\rho$. Thus, the former is a particular case of the latter with~$\phi,\rho = \pi$ for the following step operator with the angles~$0 \leq \phi, \rho < 2 \pi$~\cite{brassard2002quantum}:
\begin{align}
\label{eqn:exact_step}
    \mathbf{Q} = \mathbf{Q}(\mathcal{U}_{\qdb{k}}, \chi, \phi, \rho) = - \, \mathcal{U}_{\qdb{k}} \, \mathbf{S}_0 (\phi) \, \mathcal{U}_{\qdb{k}}^{\dagger} \mathbf{S}_{\chi} (\rho)\,.
\end{align}
This is equivalent to the operator for Grover search if the unitary~$\mathcal{U}_{\qdb{k}}$ is substituted by the Walsh-Hadamard transform~\cite{grover1996fast}. In our case, the unitary~$\mathcal{U}_{\qdb{k}}$ corresponds to the operations needed to prepare the database state and satisfies:
\begin{align}
    \mathcal{U}_{\qdb{k}} \left( |0\rangle_{I} |0\rangle_D \right) = \frac{1}{\sqrt{k}} \sum_{j=0}^{k-1}  \ket{j}_I \ket{d_j}_D \, ,
\end{align}
while the operator~$\mathbf{S}_{\chi} (\rho)$ is defined as:
\begin{equation}
    \ket{x} \mapsto \begin{cases}
e^{i \rho}\ket{x}, \, &\text{if} \,\, \chi(x)=1 \, ,\\
\,\,\,\,\,\,\,\, \ket{x}, \, &\text{if} \,\, \chi(x)=0 \, .
\end{cases}
\end{equation}
Thereby,~$\chi(x) = 1$ if and only if~$\ket{x} = \ket{0}_{I \otimes D}$ and~$\chi(x) = 0$ otherwise. Furthermore,~$\mathbf{S}_0 (\phi)$ is acting as a phase multiplication by~$e^{i \phi}$ to the all-zero string and equals to the identity operation otherwise. Hence, in the approximate case, one would apply the step operator $m$-times (i.e., this is given by~$(\mathbf{Q}(\mathcal{U}_{\qdb{k}}, \chi, \pi, \pi))^m$) with~$m = \lfloor m^* \rfloor$ and $\lfloor x \rfloor$ denoting the floor function that returns the greatest integer~$\leq x$ for~$x \in \mathbb{R}$. Thereby,~$m^* \in \mathbb{R}$ is given by the relation:
\begin{equation}
    m^* =\frac{ \mp \arcsin\left(\frac{1}{\sqrt{k+l}}\right)- \arcsin \left(\frac{1}{\sqrt{k}}\right)+ \pi  n}{2 \arcsin \left(\frac{1}{\sqrt{k}}\right)}
\end{equation}
with~$n\in \mathbb{Z}$,~$0 < l \leq k$ and~$k \geq 2$ such that~$m^*>0$. As the step operator~$(\mathbf{Q}(\mathcal{U}_{\qdb{k}}, \chi, \pi, \pi))^m$ is only applied in integer multiples, the last step is done with respect to the generalized step operator given in~\eqref{eqn:exact_step} to obtain an exact amplitude transfer. To transfer the correct amount of amplitude, one can make arbitrarily small rotations by varying the angles~$\rho, \psi$ in~\eqref{eqn:exact_step}. This is because if parameters~$\rho, \psi$ are varied continuously between~$0$ and~$2\pi$. Hence, to achieve a correct amplitude transfer in the last step of the operation, the parameters are chosen accordingly to fulfill the following equation:
\begin{align}
     \langle 0 | (\mathbf{Q}(\mathcal{U}_{\qdb{k}}, \chi, \phi, \rho)\cdot(\mathbf{Q}(\mathcal{U}_{\qdb{k}}, \chi, \pi, \pi))^m) |0 \rangle_{I \otimes D} = \sqrt{\frac{l+1}{k +l}} \,. \label{eqn:cond_chi}
\end{align}
The normalization condition implies:
\begin{align}
     \langle \text{QDB'}^{(k)}_{\neq \, 0} | (\mathbf{Q}(\mathcal{U}_{\qdb{k}}, \chi, \phi, \rho) \cdot (\mathbf{Q}(\mathcal{U}_{\qdb{k}}, \chi, \pi, \pi))^m) |\text{QDB'}^{(k)}_{\neq \, 0} \rangle = \sqrt{\frac{k-1}{k +l}} \, , 
\end{align}
with 
\begin{align}
    |\text{QDB'}^{(k)}_{\neq \, 0} \, \rangle := \frac{1}{\sqrt{k-1}} \sum_{j=1}^{k-1} \ket{j}_I \ket{d_j}_D 
\end{align}
being the re-normalized superposition state of~$k$ indices and data elements without the state~$\ket{0}_{I \otimes D}$. The best practice for the index extension scenario is to add as many indices as possible and thus be resource-efficient concerning each newly added ancilla qubit. In this case, we have~$l = k$. Thus, the total number of indices doubles during the operation. After the amplitude transfer is applied, the imbalanced amplitudes are now given in the exact case:
\begin{align}
     |\psi_1 \rangle_{I \otimes D} := \sqrt{\frac{l+1}{k+l}} \ket{0}_{I \otimes D} + \frac{1}{\sqrt{k+l}} \sum_{j=1}^{k-1}  \ket{j}_I \ket{d_j}_D  
     = \sqrt{\frac{l+1}{k+l}} \ket{0}_{I \otimes D} + \sqrt{\frac{k-1}{k+l}} |\text{QDB'}^{(k)}_{\neq \, 0} \rangle \,.
\end{align}

\begin{figure}[h!]
\begin{algorithm}[H]
\caption{\label{alg:transfer_ampl1} Transfer amplitude to zero-index state of $\qdb{k}$ (an application of Quantum Amplitude Amplification \cite{brassard1997exact, brassard2002quantum}).}
\begin{algorithmic}[1]
\Require $k >1, \, k \in \mathbb{N} $ and $l, m \in \mathbb{N}_{>0} $
\Procedure{Transfer}{$k,l$}\Comment{transfer amplitude to the $\ket{0}_I$}
\State $\qdb{k} \leftarrow \frac{1}{\sqrt{k}} \sum_{j=0}^{k-1}  \ket{j}_I \ket{d_j}_D  $ \Comment{insert $\qdb{k}$}
\State  $ |\text{QDB}^{(k)}_{imb}\rangle \leftarrow \left(\mathbf{Q}\, (\mathcal{U}_{\qdb{k}}, \chi, \phi=\pi, \rho=\pi) \right)^m \, \qdb{k} $ \Comment{apply $m$ times}
\State $|\text{QDB}^{(k)}_{imb}\rangle \leftarrow \left( \mathbf{Q} \, (\mathcal{U}_{\qdb{k}}, \chi, \phi, \rho) \right) |\text{QDB}^{(k)}_{imb}\rangle$ \Comment{apply with $\chi, \rho$ as in~\eqref{eqn:cond_chi}}
\EndProcedure
\end{algorithmic}
\end{algorithm}
\end{figure}
As the amplitude transfer is achieved, an ancilla qubit in the state~$|0\rangle$ is added as a next step. A conditional $Y$-rotation gate is applied that acts non-trivially on the added ancilla qubit. The $Y$-rotation is conditioned on the index being the zero-string and the angle~$\theta$ is given by~$\theta = 2 \arccos{(\frac{1}{\sqrt{l+1}})}$. The operation is explicitly given by: 
\begin{align}
     CR_y(\theta) := R_y(\theta) \otimes |0 \rangle \langle 0 |_{I \otimes D} + \mathbb{I}_A \otimes \left( \mathbb{I}_{I \otimes D} - |0 \rangle \langle 0 |_{I \otimes D} \right). \label{eqn:cry}
\end{align}
Applying this operation from~\eqref{eqn:cry}, one obtains the state:
\begin{align}
\label{eqn:cry_acting}
     CR_y(\theta) &\left( \ket{0}_A |\psi_1 \rangle_{I \otimes D} \right) = 
       \Bigg{(}\sqrt{\frac{1}{k+l}}|0\rangle_A + \sqrt{\frac{l}{k+l}}|1\rangle_A \Bigg{)} \ket{0}_{I \otimes D} + \frac{1}{\sqrt{k+l}} \sum_{j=1}^{k-1}  \left(|0\rangle_A \ket{j}\right)_I \ket{d_j}_D \, . 
\end{align}
The state given in~\eqref{eqn:cry_acting} can be rewritten as:
\begin{align}
     \sqrt{\frac{l}{k+l}} |1 \rangle_A \ket{0}_{I \otimes D} + \frac{1}{\sqrt{k+l}} \sum_{j=0}^{k-1}  \left(|0\rangle_A \ket{j}_I\right)_{I'} \ket{d_j}_D \, .
     \label{eqn:state1}
\end{align}
Thereby, we used the following notation~$\ket{0}_{I \otimes D} := \ket{ 0}_I \ket{0}_D$ as before. Given the state in~\eqref{eqn:state1}, the index of the zero-index~$I$ part in the state~$ |1\rangle_A |0 \rangle_I|0 \rangle_D$ is \textit{unfolded}. If we read the index basis from left to right, the new indices correlated to empty data strings will be all \textit{odd} numbers. To create~$l$ new \textit{odd} indices correlated to the empty data state~$\ket{0}_D$, we apply the following operation:
\begin{align}
    \label{eq:create_ind_alg1_general}
    |1 \rangle \langle 1|_A \otimes (P_{(l)})_I \otimes \, \mathbb{I}_D + |0 \rangle \langle 0|_A \otimes \mathbb{I}_I \otimes \, \mathbb{I}_D \, .
\end{align}
As before, the operation~$P_{(l)}$ reduces to the Walsh-Hadamard transformation for~$\log_2 (l) \not \in \mathbb{N}_{>0}$. One observes that it makes sense to fully use the available resources by adding all possible new indices per added ancilla qubit with regard to the circuit depth for each index extension operation. The circuit depth of~$P_{(l)}$ is~$\mathcal{O}(1)$ if $\log_2 (l) \in \mathbb{N}_{>0}$ and scales linearly for $\log_2 (l) \not \in \mathbb{N}_{>0}$ according to~$\mathcal{O}(n)$ with~$n = \lceil \log_2 (l) \rceil$.

The extension operation presented here is expensive as the step operator in~\eqref{eqn:exact_step} includes the database creation unitary~$\mathcal{U}_{\qdb{k}}$, and its inverse~$\mathcal{U}_{\qdb{k}}^{\dagger}$, that one might not even have at hand. One has to apply this unitary repeatedly if~$m > 1$. Hence, we considered an alternative approach that requires setting the maximal number of indices to be added beforehand. In the next section, we discuss a protocol in which one initializes a specifically imbalanced database during the preparation stage of the empty database and creates new indices later. 

\begin{figure}[h!]
\begin{algorithm}[H]
\caption{Unfold zero index-string based on Walsh-Hadamard (WH) transformation with one ancilla qubit.}\label{alg:unfold_index}
\begin{algorithmic}[1]
\Require $k >1, \, k \in \mathbb{N} $ and $l \in \mathbb{N}_{>0} $
\Procedure{Unfold}{$k,l$}\Comment{unfold zero index string into $l$ new index states}
\State $|\text{QDB}^{(k)}_{imb}\rangle$ \Comment{insert imbalanced QDB}
\State $|\text{QDB}^{(k)}_{imb}\rangle \leftarrow |0\rangle_A |\text{QDB}^{(k)}_{imb}\rangle$ \Comment{add $|0\rangle_A$}
\State  $|\text{QDB}^{(k)}_{imb}\rangle \leftarrow CR_y (\theta)|\text{QDB}^{(k)}_{imb}\rangle$ \Comment{$CR_y (\theta)$ as in \eqref{eqn:cry}}
\State $ |\text{QDB}^{(k+l)}_{imb}\rangle \leftarrow \left( |1 \rangle \langle 1|_A \otimes (P_{(l)})_I \otimes \, \mathbb{I}_D + |0 \rangle \langle 0|_A \otimes \mathbb{I}_{I \otimes D} \right)|\text{QDB}^{(k)}_{imb}\rangle$ \Comment{apply $P_{(l)}$ on register~$I$}
\EndProcedure
\end{algorithmic}
\end{algorithm}
\end{figure}
\begin{figure}[h!]
\begin{algorithm}[H]
\caption{Extension of $\qdb{k}$ by arbitrary $l>0$.}\label{alg:combined_transfer}
\begin{algorithmic}[1]
\Require $k >1, \, k \in \mathbb{N} $ and $l \in \mathbb{N}_{>0} $
\Procedure{Extend}{$k,l$}\Comment{extend database by $l$ indices correlated to~$\ket{0}_D$}
\While{$l>k$} 
\State Set $l' \leftarrow k$ and $l \leftarrow (l-k)$
\State Apply TRANSFER($k,l'$) \Comment{amplitude transfer using Alg.~\ref{alg:transfer_ampl1}}
\State Apply UNFOLD($k,l'$) \Comment{create new index states by using Alg.~\ref{alg:unfold_index}}
\State $k \leftarrow k+l'$
\EndWhile
\If{$l \leq k$} 
\State Set $l' \leftarrow l$
\State Apply TRANSFER($k,l'$) \Comment{amplitude transfer using Alg.~\ref{alg:transfer_ampl1}}
\State Apply UNFOLD($k,l'$) \Comment{create new index states by using Alg.~\ref{alg:unfold_index}}
\State $k \leftarrow k+l'$
\EndIf 
\State \textbf{return} $\qdb{k}$\Comment{balanced superposition as database with $k+l$ index elements}
\EndProcedure
\end{algorithmic}
\end{algorithm}
\end{figure}

\subsubsection{Initialize and unfold}
This extension technique does not depend on using the unitary~$\mathcal{U}_{\qdb{k}}$ (or its inverse). In general, the extension mechanism is described by the following map:
\begin{align}
    \mathcal{H}_I \otimes \mathcal{H}_D &\longrightarrow  \mathcal{H}_{I'} \otimes \mathcal{H}_D := (\mathcal{H}_A + \mathcal{H}_I)_{I'} \otimes \mathcal{H}_D \nonumber \\
    |\text{QDB}^{(k)}_{imb} \rangle &\mapsto |\text{QDB}^{(k+f)}_{imb'} \rangle \, .
\end{align}
The dimensions are given as follows:~$\dim(\mathcal{H}_A) = z$,~$\dim(\mathcal{H}_I) = k$ and~$\dim(\mathcal{H}_D) = m$ as before. We make use of a modification of the generalized prepare operation~$P_{(l)}$ given in~\ref{sec:prepare_general} or alternatively general amplitude encoding~(see e.g.,~\cite{schuld2021machine}). Thereby, we know that we would like to add at most~$l$ new elements to the database. Thus, we initialize the following state:
\begin{align}
    |\text{QDB}^{(k)}_{imb} \rangle := \sqrt{\frac{l+1}{l+k}} \, \ket{0}_{I \otimes D} + \sqrt{\frac{1}{l+k}} \, \sum_{j =1}^{k-1} |j \rangle_I | 0 \rangle_D \, .
\end{align}
A corresponding quantum circuit for amplitude encoding operates on~$\Tilde{k}$ qubits and has a depth of~$\mathcal{O}(\text{poly}(k))$~\cite{schuld2021machine}. Following this operation, we write the first~$(k-1)$ data elements to the empty database we just prepared, and we obtain the following:
\begin{align}
    |\text{QDB}^{(k)}_{imb'} \rangle =
     \sqrt{\frac{l+1}{l+k}} \, | 0 \rangle_{I \otimes D} + \sqrt{\frac{1}{l+k}} \, \sum_{j =1}^{k-1} |j \rangle_I | d_j \rangle_D \, .
\end{align}
Next, we correlate~$z$ ancilla qubits in the~$\ket{0}$-state, denoted by~$|0\rangle_A$, in order to create at most~$l \leq (2^z-1) \cdot k$ new indices correlated to the empty data string: 
\begin{align}
    |0\rangle_A \left(\sqrt{\frac{l+1}{l+k}} \, | 0 \rangle_{I \otimes D} + \sqrt{\frac{1}{l+k}} \, \sum_{j =1}^{k-1} |j \rangle_I | d_j \rangle_D \right)\,.
    \label{eqn:zancilla0}
\end{align}
In this case, we can read the binary indices from right to left, and thus, all newly added indices have an absolute value larger than~$k$. Hence, if we add new indices in a parallelized manner, we proceed as follows for~$l' \leq (2^z -1)$:
\begin{align}
     CP_A (z) := \Big{(} (P_{(l'+1)})_A \otimes |0 \rangle \langle 0 |_I  +  \mathbb{I}_{A} \otimes (\mathbb{I}_I - |0 \rangle \langle 0|_I )\Big{)} \otimes \mathbb{I}_D \, .
    \label{eqn:hadamard_op}
\end{align}
The best practice here is to create as many new indices as possible. This reduces the operation~\eqref{eqn:hadamard_op} to:
\begin{align}
     CH_A (z) := \Big{(} (H^{\otimes z})_A \otimes |0 \rangle \langle 0 |_I  +  \mathbb{I}_{A} \otimes (\mathbb{I}_I - |0 \rangle \langle 0|_I )\Big{)} \otimes \mathbb{I}_D \, ,
    \label{eqn:hadamard_op22}
\end{align}
and we have~$l' = (2^z -1)$ when using~\eqref{eqn:hadamard_op22}. Thereby,~$l'$ is chosen to be ~$l'=(2^z -1)$ if~$ l \geq (2^z -1)$ and~$l'=l$ otherwise. Applying the operator given in~\eqref{eqn:hadamard_op} on the database state in~\eqref{eqn:zancilla0} results in:
\begin{align}
     \sqrt{\frac{l+1}{(l'+1)\cdot (l+k) }} \, \sum_{i=0}^{l'}|i\rangle_A | 0 \rangle_{I \otimes D} + \sqrt{\frac{1}{l+k}} \, |0\rangle_A \sum_{j =1}^{k-1} |j \rangle_I | d_j \rangle_D \,.
     \label{eqn:qdb_unfold1}
\end{align}
 By applying~\eqref{eqn:hadamard_op} on the state~\eqref{eqn:zancilla0} we bring the indices in the~$A$-subspace in superposition, conditioned on the index in~$I$ being~$|0 \rangle_I$.
 As given in~\eqref{eqn:qdb_unfold1}, the operation decreases the amplitude of the initial $\ket{0}_{I \otimes D}$ state by a factor of~$\sqrt{1/(l'+1)}$. We created a superposition of all possible combinations of orthogonal states in the $A$-subspace that are all in a product state concerning the empty old index string part~$I$ and the all-zero data part, i.e., the state~$\ket{0}_{I \otimes D}$. The operation defined in~\eqref{eqn:hadamard_op} thus creates~$l'$ new index states correlated to the empty data register. Alternatively, if one aims to add more indices, one may bring the old index $I$-subspace (correlated to indices with a Hamming weight larger than one in the $A$-subspace) into superposition. In the case that~$l\leq (2^z -1)$, we do not need to apply the next steps as we have already added enough indices. If~$l> (2^z -1)$, we add~$l-l' =: l'' \leq (k-1) \cdot (2^z-1)$ new indices. For this, we mark all states with a Hamming weight larger than one by an ancilla qubit. This is done by utilizing a multi-controlled CNOT gate~$\Lambda_{\tau}(\sigma_x)$ with $\tau$ controls by applying the following with the initial state given by~$\ket{0}_A \ket{\mathbf{s}}_B \in \mathcal{H}_A \otimes \mathcal{H}_{B}$ with~$\mathbf{s} \in \{0,1 \}^{|B|}$:
 \begin{equation}
      \left( \mathbb{I}_A \otimes\left(\otimes_{|B|} \sigma_x \right) \right)  \left( \Lambda_{|B|}(\sigma_x) \ket{0}_A \ket{\Bar{\mathbf{s}}}_B \right) \left( \mathbb{I}_A \otimes\left(\otimes_{|B|} \sigma_x \right) \right) \, .
 \end{equation}
 thereby,~$\mathcal{H}_A$ denotes the ancilla subspace, and the control is applied onto~$\mathcal{H}_B$. The resulting state is given by~$\ket{f(\mathbf{s})} \ket{\mathbf{s}}$ with
 \begin{equation}
    f(\mathbf{s}) = \begin{cases} 0, \,\,\text{if H}(\mathbf{s}) = 0 \,, \\ 1, \,\, \text{if H}(\mathbf{s}) > 0 \,. \end{cases}
    \label{eqn:hamming_func} \nonumber
 \end{equation}
 In~\eqref{eqn:hamming_func},~$\text{H} (\mathbf{s})$ denoting the Hamming weight of the binary string~$\mathbf{s}$. Therefore, a $\sigma_x$ gate is applied whenever the control is given by~$\mathbf{s} = \{0\}^{|B|}$. The scaling of the algorithm depends on the decomposition of the gate~$\Lambda_{\tau}(\sigma_x)$. Its decomposition into CNOT and single-qubit gates has been discussed in Appendix~\ref{sec:prepare_general}. Conditioned on the ancilla qubit being in~$\ket{0}$, we create new indices in the \textit{old} index subspace~$I$. Alternatively to adding a marker qubit, one can apply the following conditional operation with the effect (with~$l' \leq (\log_2(k) -1)$):
 \begin{align}
    CP_I (k) := \Big{(}|0 \rangle \langle 0 |_A \otimes \mathbb{I}_{I \otimes D} \Big{)} + \Big{(} ( \mathbb{I}_A -  |0 \rangle \langle 0 |_A) \otimes (P_{(l'')})_I \otimes  \mathbb{I}_D \Big{)} \, .
    \label{eqn:hadamard_op2}
\end{align}
This transformation decreases the amplitude of all states (except those correlated to~$|0 \rangle_A$) by a factor of~$\frac{1}{\sqrt{l''}}$. As a result, we added in total~$l' \cdot l''$ new index states correlated to the empty data string. A complete amplitude balance is only achieved in a special case for this operation. The final state after applying this extension procedure and adding all the indices is given as follows:
\begin{align}
    \label{eqn:qdb_extend2}
    |\text{QDB}^{(k + l' \cdot l'')}_{imb} \rangle := \alpha \cdot |0\rangle_A | 0 \rangle_{I \otimes D} + \sum_{j =1}^{k-1} \beta \cdot (|0\rangle_A |j \rangle_I) | d_j \rangle_D  + \sum_{h = 0}^{l''-1} \sum_{j =1}^{l'+1} \gamma \cdot  |j \rangle_{A}|h \rangle_{I} |0 \rangle_D \, ,
\end{align}
with 
\begin{align}
    \alpha := \sqrt{\frac{l+1}{(l'+1)\cdot (l+k) }}, \,\,\, \beta := \sqrt{\frac{1}{l+k }}, \,\,\, \gamma := \sqrt{\frac{l+1}{l'' \cdot (l'+1) \cdot (l+k)}} \, .
\end{align}
Therefore, we obtain a database that is balanced except with respect to the~$\ket{0}_A| 0 \rangle_{I \otimes D} $ state, if~$\frac{l+1}{(l'+1)\cdot l''} \equiv 1$.
\begin{figure}
\begin{algorithm}[H]
    \caption{Extend imbalanced database by $l$ new index elements with multiple ancilla qubits.}\label{alg:imbalanced_extend}
    \begin{algorithmic}[1]
        \Require $k >1, \, k \in \mathbb{N} $ and $l, l', l'', z \in \mathbb{N}_{>0} $
        \Procedure{Imbalanced Extend}{$k,l$}\Comment{extend imbalanced $|\text{QDB}^{(k)}_{imb}\rangle$}
        \State $|\text{QDB}^{(k)}_{imb}\rangle \leftarrow \sqrt{\frac{l+1}{k+l}} \, |0\rangle_{I\otimes D} + \frac{1}{\sqrt{k+l}} \sum_{j=1}^{k-1} |j\rangle_I \otimes \, |0 \rangle_D$
        \Comment{prepare empty QDB}
        \State $|\text{QDB}^{(k)}_{imb}\rangle \leftarrow \sqrt{\frac{l+1}{k+l}} \, |0\rangle_{I\otimes D} + \frac{1}{\sqrt{k+l}} \sum_{j=1}^{k-1} |j\rangle_I \otimes \, | d_j\rangle_D$  \Comment{write data in register $D$}
        \State $|\text{QDB}^{(k)}_{imb}\rangle \leftarrow |0 \rangle_A |\text{QDB}^{(k)}_{imb}\rangle$ \Comment{correlate $z$ ancilla qubits to index}
        \State $|\text{QDB}^{(k+l')}_{imb}\rangle \leftarrow CP_A (z) |\text{QDB}^{(k)}_{imb}\rangle$ \Comment{create $(l')$ new indices}
        \State $|\text{QDB}^{(k + l' \cdot l'')}_{imb}\rangle \leftarrow CP_I (k) |\text{QDB}^{(k+l')}_{imb}\rangle$ \Comment{create $(l')\cdot (l''-1)$ new indices}
        \EndProcedure
    \end{algorithmic}
\end{algorithm}
\end{figure}
As a note, we add here that, in general, one may add individual indices instead of applying~\eqref{eqn:qdb_extend2} based on a combination of Hadamard and CNOT operations. When using this, one must keep track of the newly added indices to avoid adding the same index \textit{twice}. This would ``disturb'' the correlated data state. 
\newline
\newline
\noindent
\textbf{Example case of Algorithm~\ref{alg:imbalanced_extend} for~$z=1$ and~$l \leq k$ .} \newline
\label{sec:extend2_simple}
\noindent
We present the extension Algorithm~\ref{alg:imbalanced_extend} for the simplified case in which we add one ancilla qubit~($z=1$) to the index register of the imbalanced database. Initially, we start off in an imbalanced database state (prepared for~$l$ new elements to be added) and have a large amplitude on the all-zero index state that we see as playing the role of a \textit{reservoir}. The database contains already written data elements; thus, we start off in the imbalanced state that is prepared for~$l$ new index elements to be added:
\begin{equation}
    |\text{QDB}^{(k)}_{imb}\rangle = \sqrt{\frac{l+1}{l+k}}\ket{0}_I \ket{0}_D + \frac{1}{\sqrt{l+k}}\sum_{j=1}^{k-1} \ket{j}_I \ket{d_j}_D \, . 
\end{equation}
Next, we add a single ancilla to the index register. We denote the ancilla register by~$A$:
\begin{equation}
    |\text{QDB}^{(k)}_{imb,A}\rangle = \sqrt{\frac{l+1}{l+k}}\ket{0}_A \ket{0}_I \ket{0}_D + \frac{1}{\sqrt{l+k}}\sum_{j=1}^{k-1}\ket{0}_A \ket{j}_I \ket{d_j}_D \, . 
    \label{eq:imb_A}
\end{equation}
Following that, we apply a conditional $Y$-rotation to the state in~\eqref{eq:imb_A} and obtain:
\begin{align}
    |\text{QDB}^{(k)}_{imb,A}\rangle = \sqrt{\frac{l+1}{l+k}}\left( \frac{1}{\sqrt{l+1}}\ket{0}_A + \sqrt{\frac{l}{l+1}} \ket{1}_A \right) \ket{0}_I \ket{0}_D + \frac{1}{\sqrt{l+k}}\sum_{j=1}^{k-1}\ket{0}_A \ket{j}_I \ket{d_j}_D \,  \nonumber \\
    = \frac{1}{\sqrt{l+k}}\ket{0}_A\ket{0}_I \ket{0}_D + \sqrt{\frac{l}{l+k}} \ket{1}_A \ket{0}_I \ket{0}_D + \frac{1}{\sqrt{l+k}}\sum_{j=1}^{k-1}\ket{0}_A \ket{j}_I \ket{d_j}_D \, . 
    \label{eq:imb_A2}
\end{align}
Following that, we can create new indices from the state~$\ket{1}_A \ket{0}_I$ in order not to affect the already existing indices that correspond to written data elements. To implement this function, we use the generalized prepare operation for~$l$ elements:
\begin{align}
    \ket{1}_A \ket{0}_I \xmapsto{P_{(l)}} \ket{1}_A \left(\frac{1}{\sqrt{l}} \sum_{j=0}^{l-1}\ket{j}_I \right) \, .
\end{align}
Thus, we obtained the following superposition state with~$l$ new indices:
\begin{align}
    \qdb{k+l} = \frac{1}{\sqrt{l+k}}\ket{0}_A\ket{0}_I \ket{0}_D + \frac{1}{\sqrt{l+k}} \sum_{i=0}^{l-1} \left( \ket{1}_A \ket{i}_I \right) \ket{0}_D + \frac{1}{\sqrt{l+k}}\sum_{j=1}^{k-1}\ket{0}_A \ket{j}_I \ket{d_j}_D \, .
    \label{eq:balanced_kl}
\end{align}
In~\eqref{eq:balanced_kl}, we obtained a balanced quantum database state with~$l$ new index elements correlated to~$\ket{0}_D$.

\subsection{Note on disentangling quantum states}
\label{sec:disentangling_quantum_states}
In order to describe the process of~\textit{disentanglement}, we first summarize \textit{entanglement} conceptually. Hence, we consider the Schmidt decomposition for bipartite pure states of a composite system~$A+B$. The Schmidt decomposition~\cite{nielsen2001quantum} of a pure quantum state is given by:
\begin{equation}
    | \psi \rangle = \sum_{i=1}^{r_{\psi}} \lambda_i (|a_i \rangle_A \otimes |b_i \rangle_B )
\end{equation}
with~$\lambda_i \in \mathbb{R}, \lambda_i \geq 0$ and~$\sum_{i=1}^{r_{\psi}} \lambda_i^2 =1$. Thereby,~$r_{\psi}$ is the Schmidt rank of~$| \psi \rangle$, and a state is entangled if and only if~$r_{\psi}>1$. The reduced density matrices of the subsystems $A$ and $B$ are given by~$\text{Tr}_B(| \psi \rangle \langle \psi |) =: \rho_A = \sum_{i=1}^{r_{\psi}} \lambda_i^2 (|a_i \rangle \langle a_i|_A)$ and~$\text{Tr}_A(| \psi \rangle \langle \psi |) =: \rho_B = \sum_{i=1}^{r_{\psi}} \lambda_i^2 (|b_i \rangle \langle b_i|_B)$.
During the general read-in process to the quantum database, a respective sensor would get entangled with the database state. However, after the data is stored in the database, we no longer need the sensor and would like to trace out this additional system. This is non-trivial due to the entanglement between both subsystems (database and sensor). Thus, we consider the concept of disentanglement while preserving as many local properties as possible. In general, in the case of the database, we aim to maintain the database but not the sensor system. \newline
\noindent 
Under the process of \textit{disentanglement}, as introduced in~\cite{mor1999disentangling, bruss1998optimal}, we understand the following operation:
\begin{theorem}[Definition~\cite{mor1999disentangling, bandyopadhyay1999disentanglement}]
Disentanglement is the process that transforms a state of two or more subsystems into an unentangled state in general, i.e., a mixture of product states such that the reduced density matrices of each of the subsystems are unaffected.
\end{theorem}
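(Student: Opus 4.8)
The plan is first to read the statement for what it structurally is: a \emph{definition} that fixes the meaning of the word ``disentanglement,'' rather than an assertion carrying logical content. Strictly, a definition has no proof. The only provable claim implicit in ``the process that transforms\ldots'' is a \emph{realizability} statement, and that is what I would establish: for any $\ket{\psi} \in \mathcal{H}_A \otimes \mathcal{H}_B$ there exists a quantum operation $\Phi$ such that $\Phi(\ket{\psi}\bra{\psi})$ is a mixture of product states whose two reduced density matrices coincide with $\rho_A = \text{Tr}_B(\ket{\psi}\bra{\psi})$ and $\rho_B = \text{Tr}_A(\ket{\psi}\bra{\psi})$.

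The first step reuses the Schmidt decomposition already introduced in this appendix, $\ket{\psi} = \sum_{i=1}^{r_{\psi}} \lambda_i \, \ket{a_i}_A \otimes \ket{b_i}_B$ with orthonormal Schmidt vectors, and proposes the target
\begin{equation}
    \rho' = \sum_{i=1}^{r_{\psi}} \lambda_i^2 \, \ket{a_i}\bra{a_i}_A \otimes \ket{b_i}\bra{b_i}_B \, ,
\end{equation}
which is manifestly a convex combination of product states. I would then check the marginal condition directly, using that the reduced states supplied by the Schmidt decomposition are already diagonal: $\text{Tr}_B(\rho') = \sum_i \lambda_i^2 \ket{a_i}\bra{a_i}_A = \rho_A$ and $\text{Tr}_A(\rho') = \sum_i \lambda_i^2 \ket{b_i}\bra{b_i}_B = \rho_B$, so both subsystems are left unaffected, exactly as the definition requires.

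The second step exhibits a physical process realizing $\ket{\psi}\bra{\psi} \mapsto \rho'$. The candidate is the dephasing channel on $A$ in its Schmidt eigenbasis, $\Phi(\cdot) = \sum_i (\ket{a_i}\bra{a_i}_A \otimes \mathbb{I}_B)(\cdot)(\ket{a_i}\bra{a_i}_A \otimes \mathbb{I}_B)$. Writing $\ket{\psi}\bra{\psi} = \sum_{k,l} \lambda_k \lambda_l \ket{a_k}\bra{a_l}_A \otimes \ket{b_k}\bra{b_l}_B$ and applying $\Phi$ kills every off-diagonal $k \neq l$ term and returns precisely $\rho'$; since $\rho_A$ is diagonal in $\{\ket{a_i}\}$ the map leaves it invariant, and the computation above shows $\rho_B$ is untouched. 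Operationally $\Phi$ is a von Neumann measurement in the Schmidt basis with the outcome discarded, i.e.\ the very ``trace out the sensor'' step this section needs.

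The main obstacle, and the reason the cited works treat the topic seriously rather than as a one-line exercise, is that this construction is irreducibly \emph{state dependent}: $\Phi$ is built from the Schmidt basis of the specific $\ket{\psi}$. The hard part is showing the dependence cannot be removed, namely that no single \emph{universal} disentangling operation preserves both marginals for all inputs. The natural argument feeds such a hypothetical map a one-parameter family of nonorthogonal (and in particular entangled) states and derives a contradiction from the linearity and complete positivity of quantum operations combined with marginal preservation, in the same spirit as the overlap argument used for the Extend Lemma in Appendix~\ref{sec:extend_problem}. I would therefore present the existence result constructively and flag this no-go for the universal case as the genuinely nontrivial content.
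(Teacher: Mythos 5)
Your reading of the statement is the same as the paper's: it is a definition (the theorem environment is literally titled ``Definition'' and backed by citations), and the paper accordingly offers no proof of it whatsoever --- it simply states the definition and moves on to a separate lemma about \emph{unitary} disentanglers. So there is no paper proof to match; what you have done is supply the realizability content that the paper leaves implicit, and your construction is correct: with $\ket{\psi}=\sum_{i}\lambda_i\ket{a_i}_A\ket{b_i}_B$, the Schmidt-basis dephasing channel $\Phi(\cdot)=\sum_i(\ket{a_i}\bra{a_i}_A\otimes\mathbb{I}_B)(\cdot)(\ket{a_i}\bra{a_i}_A\otimes\mathbb{I}_B)$ maps $\ket{\psi}\bra{\psi}$ to $\rho'=\sum_i\lambda_i^2\,\ket{a_i}\bra{a_i}_A\otimes\ket{b_i}\bra{b_i}_B$, which is a mixture of product states with both marginals intact, exactly as the definition demands. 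Two points of contact with the paper are worth making explicit. First, your $\Phi$ is deliberately non-unitary (a measurement with discarded outcome); this is consistent with, and nicely illustrates, the paper's subsequent lemma showing that a \emph{unitary} transformation achieving the product form of~\eqref{eqn:decorrelation1} forces the $B$-marginal to be a pure state --- so your channel-based construction is not merely convenient but necessary in general. Second, note a mild mismatch in targets: the paper's equation~\eqref{eqn:distangling_machine} asks for the full product $\rho_A\otimes\rho_B$, which is strictly stronger than the definition's ``mixture of product states''; your $\rho'$ retains classical correlations and satisfies the definition but not the product form, and the no-universal-machine result the paper cites (and which you correctly flag as the genuinely hard content, provable by a linearity/overlap argument in the spirit of Appendix~\ref{sec:extend_problem}) is what rules out a state-independent map in either version. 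Your proposal buys a concrete, self-contained witness that the defined process exists state-by-state; the paper buys brevity by outsourcing the definition and the no-go to the cited literature.
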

\noindent 
Mathematically, we aim to do the following transformation~\cite{terno1999nonlinear}:
\begin{equation}
    \rho_{AB} = | \psi \rangle \langle \psi |  \rightarrow \rho_A \otimes \rho_B \, .
    \label{eqn:distangling_machine}
\end{equation}
It was shown that there exists no universal disentangling machine (describing the transformation for a general entangled state~$\rho_{AB}$ in~\eqref{eqn:distangling_machine}) that transforms an arbitrary entangled (or inseparable state) into a separable one~\cite{mor1999disentangling, terno1999nonlinear}. 
But what happens if we are only interested in a unitary transformation doing the following operation :
\begin{equation}
    | \psi \rangle \langle \psi |_{AB} = \sum_{i,j=1}^{r_{\psi}} \lambda_i \lambda_j^* (|a_i \rangle \langle a_j|_A \otimes |b_i \rangle \langle b_j|_B ) \rightarrow \left( \sum_{i,j=1}^{r_{\psi}} \lambda_i \lambda_j^* |a_i \rangle \langle a_j|_A   \right) \otimes \sigma_B \, .
    \label{eqn:decorrelation1}
\end{equation}

\begin{lemma}
    If the transformation presented in~\eqref{eqn:decorrelation1} is a unitary transformation, we have~$H(\sigma_B) \equiv 0$ with~$H(\cdot)$ being the von Neumann entropy. 
\end{lemma}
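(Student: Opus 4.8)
The plan is to reduce everything to a single conserved quantity: the spectrum of the total density operator, which no global unitary can alter.

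First I would record that the input of the map in~\eqref{eqn:decorrelation1}, namely $\rho_{AB}=\ket{\psi}\bra{\psi}$, is pure and hence has vanishing von Neumann entropy, $H(\rho_{AB})=0$. A unitary acts on density operators by conjugation $\rho_{AB}\mapsto U\rho_{AB}U^{\dagger}$, and conjugation leaves the eigenvalues of a Hermitian operator unchanged; therefore it preserves every function of the spectrum, in particular $H(\cdot)$. So, if the transformation is unitary, the output must also have entropy $0$.

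Next I would analyze the target. Setting $\ket{\phi}_A:=\sum_{i=1}^{r_{\psi}}\lambda_i\ket{a_i}_A$, the $A$-block collapses to a rank-one projector, $\sum_{i,j}\lambda_i\lambda_j^{*}\ket{a_i}\bra{a_j}_A=\ket{\phi}\bra{\phi}_A$, and since $\langle\phi|\phi\rangle=\sum_i\lambda_i^{2}=1$ this is a normalized pure state with $H(\ket{\phi}\bra{\phi}_A)=0$. Invoking additivity of the von Neumann entropy over tensor products, $H(\rho\otimes\sigma)=H(\rho)+H(\sigma)$, the output entropy is $H\!\left(\ket{\phi}\bra{\phi}_A\otimes\sigma_B\right)=H(\sigma_B)$.

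Finally I would equate the two, as unitarity demands: $0=H(\sigma_B)$. Since $H(\sigma_B)=0$ if and only if $\sigma_B$ is pure, the claim follows. I do not expect a genuine obstacle here, as the statement is in essence a conservation law; the only point requiring care is that~\eqref{eqn:decorrelation1} be read as a bona fide map of density operators whose target $A$-marginal is exactly the pure state $\ket{\phi}\bra{\phi}_A$, since it is precisely this purity that leaves no entropy budget for $\sigma_B$. As a cross-check one can dispense with entropy altogether: conjugation preserves the rank of the density operator (here $1$), and $\mathrm{rank}(\rho\otimes\sigma)=\mathrm{rank}(\rho)\,\mathrm{rank}(\sigma)$, forcing $\mathrm{rank}(\sigma_B)=1$ and hence $\sigma_B$ pure.
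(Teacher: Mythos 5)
Your proof is correct and takes essentially the same route as the paper's: conservation of von Neumann entropy under unitary conjugation, additivity of $H$ over tensor products, and purity of the $A$-factor together force $H(\sigma_B)=0$, hence $\sigma_B$ pure. If anything, your version is cleaner, since the paper assigns the pure input the value $\sum_i|\lambda_i|^2\log_2(|\lambda_i|^2)$ (a slip --- the entropy of a pure state is simply zero), whereas you invoke the purity of the input and of the target $A$-marginal directly, and your rank-based cross-check is a nice unitarity argument that dispenses with entropy altogether.
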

\begin{proof}
    Let us consider the von Neumann entropies:~$H(| \psi \rangle \langle \psi |_{AB}) = \sum_{i=1}^{r_{\psi}} |\lambda_i|^2 \log_2(|\lambda_i|^2)$ and $H\left( \sum_{i,j=1}^{r_{\psi}} \lambda_i \lambda_j^* |a_i \rangle \langle a_j|_A   \right) = \sum_{i=1}^{r_{\psi}} |\lambda_i|^2 \log_2(|\lambda_i|^2) = H(| \psi \rangle \langle \psi |)$. Suppose the transformation in~\eqref{eqn:decorrelation1} is done by a similarity transformation, i.e.,~$\rho \mapsto U \rho \, U^{-1}$ with~$U \in GL(\mathcal{H}_A)$ with~$\rho$ being an arbitrary state on~$S(\mathcal{H}_{AB})$, the total von Neumann entropy is constant during the transformation. Thus,~$H(\sigma_B)=0$ and then~$\sigma_B = | \Sigma \rangle \langle \Sigma |_B$ with~$| \Sigma \rangle_B $ being necessarily a pure state with von Neumann entropy zero. 
    
\end{proof}
\noindent
This relates to the preservation of local properties during the disentanglement process. 

\begin{equation}
    | \psi \rangle_{AB} = \sum_{i=1}^{r_{\psi}} \lambda_i (|a_i \rangle_A \otimes |b_i \rangle_B ) \rightarrow \left( \sum_{i=1}^{r_{\psi}} \lambda_i |a_i \rangle_A  \right) \otimes |\Sigma \rangle_B \, .
    \label{eqn:decorrelation}
\end{equation}
Here, system~$A$ denotes the quantum database, and system~$B$ is the sensor from which we read the data. During the writing process, both systems get entangled. For specific cases, e.g., if only classical correlations are present, techniques such as those presented in~\cite{cortese2018loading} can be used. 

\end{document}